\newtheorem{theorem}{Theorem}
\newlist{assumptionnum}{enumerate}{1}
\setlist[assumptionnum]{label=(\alph*),ref=\thetheorem(\alph*)}
\newcommand{\Cwsd}{Counterbalanced within-subjects design}
\newcommand{\cwsd}{counterbalanced within-subjects design}
\newcommand{\Wsd}{Within-subjects design}
\newcommand{\bsd}{between-subjects design}
\newtheorem*{remark}{Remark}
\newtheorem*{retheorem}{Theorem}
\newtheorem{prop}{Proposition}
\newtheorem*{reprop}{Proposition}
\newtheorem{cor}{Corollary}
\newtheorem*{recor}{Corollary}
\newtheorem{definition}{Definition}
\newtheorem{assumption}{Assumption}
\newcommand{\indep}{\hspace{-.2em}\perp\!\!\!\!\perp\hspace{-.2em}} 
\newcommand{\blind}{0}
\begin{document}

\def\spacingset#1{\renewcommand{\baselinestretch}%
{#1}\small\normalsize} \spacingset{1}


\if0\blind
{
  \title{\bf Causal Inference in Counterbalanced Within-Subjects Designs}
  \author{Justin Ho
  \thanks{Primary and Corresponding Author (jho@g.harvard.edu)}\hspace{.2cm}\\
    Harvard University\\
    and \\
    Jonathan Min\\
    University of California, Berkeley}
  \maketitle
} \fi

\if1\blind
{
  \bigskip
  \bigskip
  \bigskip
  \begin{center}
    {\LARGE\bf Causal Inference in Counterbalanced Within-Subjects Designs}
\end{center}
  \medskip
} \fi

\bigskip

\begin{abstract}
    Experimental designs are fundamental for estimating causal effects. In some fields, within-subjects designs, which expose participants to both control and treatment at different time periods, are used to address practical and logistical concerns. Counterbalancing, a common technique in within-subjects designs, aims to remove carryover effects by randomizing treatment sequences. Despite its appeal, counterbalancing relies on the assumption that carryover effects are symmetric and cancel out, which is often unverifiable \emph{a priori}. In this paper, we formalize the challenges of counterbalanced within-subjects designs using the potential outcomes framework. We introduce \emph{sequential exchangeability} as an additional identification assumption necessary for valid causal inference in these designs. To address identification concerns, we propose diagnostic checks, the use of washout periods, and covariate adjustments, and alternative experimental designs to \cwsd{}. Our findings demonstrate the limitations of counterbalancing and provide guidance on when and how within-subjects designs can be appropriately used for causal inference.
\end{abstract}

\noindent%
{\it Keywords:}  Carryover effects, Sequential exchangeability, Potential outcomes framework, Treatment effect identification, Within-subject experimental design, Causal inference assumptions

\vfill

\newpage
\spacingset{1.45} 

\section{Introduction}

\begin{quote}
\textit{"Failure to conclude that a model is false must be a failure of our imagination, not a success of the model."}
\hfill --- \cite{mcelreath2018statistical}
\end{quote}

Experimental designs are increasingly used in the social and behavioral sciences in estimating the causal effects of various interventions. The randomized controlled trial (RCT) is widely regarded to be the `gold standard' when it comes to estimating the causal effects of treatment as opposed to estimates from an observational study. Typically, participants are divided into distinct treatment and control groups, allowing for straightforward comparisons of intervention outcomes in RCTs.\footnote{This design is called \bsd{} in some fields since interventions are randomly assigned between subjects \citep{02_maxwell2017designing}.} However, this conventional approach is not the only experimental design used.

\Wsd{}, also known as repeated-measures design, offers an alternative that exposes all participants to every condition in a study. This design is sometimes used in fields such as experimental psychology \citep{keren2014between}, management sciences \citep{Lane2024}, political science \citep{09_Clifford}, and even medical trials \citep{Sarkies2019}, which enables researchers to compare responses across treatments for the same individuals. Experimenters often employ these methods when they face logistical challenges in recruitment, implementation (particularly regarding the costs associated with specific trials), or when the population concerned is relatively small, and the experimenter wishes to ensure there is \textit{common support} in terms of the exposure of the treatment with respect to the covariates (\citealp{02_maxwell2017designing}; \citealp{ErlebacherAlbert1977Daao}). However, this experimental design introduces challenges in causal inference. 

For example, imagine a study testing the effect of background music on concentration. Participants complete two problem-solving tasks: one in silence and one with music. If every participant always does the silent condition first and the music condition second, any observed difference in performance might not be due to the music itself but rather to other factors. For instance, they might perform better in the second task simply because they have become more familiar with the problem format (a practice effect) or worse due to mental fatigue. Since these order-related influences are confounded with the treatment, the researcher cannot tell whether any observed changes are truly caused by the music or just by the passage of time and repeated testing. The chief culprit confounding the estimate of the treatment effect in within-subjects designs are \emph{carryover effects}, which are the residual effects from the first period that influence responses in the second period \citep{02_maxwell2017designing}.

To address these concerns, counterbalancing is often applied to within-subjects designs. In clinical trials, this design is also known as crossover trials or AB/BA designs \citep{ZHANG2022_crossovertrials, sibbald1998understanding, matthew_multiperiod_crossover}. This approach randomizes the order of treatment and control conditions for participants, assuming that this would cancel out any carryover effects, resulting in an unbiased estimate. Figure \ref{fig:CWSD_Figure} shows how a \cwsd{} is implemented in the case where there are two experimental conditions. However, this assumption is often overly optimistic. Counterbalancing assumes that carryover effects are symmetric and cancel out, yet this symmetry is unverifiable \emph{a priori}. Differential carryover effects, where one condition's impact persists or interacts with subsequent conditions in an asymmetric manner, can bias estimates of the average treatment effect even when participants are properly counterbalanced.

There are some advantages in using within-subjects design in some empirical settings, as it aligns more naturally with economic and psychological theories that model how individuals respond to changing conditions. For example, in empirical studies testing utility theories and preference, participants might exposed to multiple conditions of the study, allowing the experimenter to observe how an individual makes such a tradeoff \citep{09_Charness}. 

Despite their inferential limitations, counterbalanced within-subjects designs offer several practical advantages. They allow researchers to collect multiple observations per participant, improving statistical efficiency—especially useful when working with small populations or when recruitment is costly or difficult, such as in studies involving venture capitalists \citep{Lane2024}. In such cases, a between-subjects design may suffer from issues like limited overlap or lack of common support across covariates, making within-subjects designs an attractive alternative despite their potential identification problems.

In this paper, we analyze the implications of using counterbalanced within-subjects designs for causal inference, analyzing the problem with the potential outcomes framework. Specifically, we first explore the assumptions required in \cwsd{} to produce treatment effect estimates \emph{consistent} with those obtained from \bsd{} in the case of two discrete time periods, with the goal of estimating the Average Treatment Effect (ATE) as defined in the potential outcomes framework. We then introduce \textit{sequential exchangeability} as a generalized assumption required, alongside the standard assumptions of the Rubin Causal Model, to identify the causal effects of treatment in other sequential designs similar to a counterbalanced within-subjects design. We introduce sequential randomization and selective sequential randomization in Section \ref{sec: seq_exchangeability_alt_design} as alternative sequential designs where the assumptions of \textit{sequential exchangeability} is more credible than the commonly used \cwsd{}.

Sequential designs are often used to improve the efficiency and reliability of experiments. For example, \cite{04_whitehead2020estimation} use a design that allows dropping underperforming treatments mid-trial, \cite{08_zhou2018sequential} propose a rerandomization method to improve covariate balance over time, and \cite{03_tamura2011estimation} apply a sequential parallel design to reduce placebo effects. While these approaches differ in motivation, they all rely on using data collected across stages to better estimate treatment effects. We focus on how similar ideas apply to counterbalanced within-subjects designs and relate \textit{sequential exchangeability} as a condition under which such designs can identify the average treatment effect (ATE).

We want to emphasize that such assumptions are typically difficult to verify \emph{a priori} and hence, we do not advocate the use of such experimental designs to estimate ATE, even in the case of sequential randomization where the assumptions are more credible. However, in situations where such a design is required due to the aforementioned logistical challenges, we provide a way to reason, and guidance to ensure that the treatment effects are properly estimated.

Our work builds on existing work of analyzing experimental designs within the potential‐outcomes framework such as in mediation analysis, factorial designs, and conjoint studies, by making explicit the precise causal assumptions each design invokes. In mediation analysis, for example, researchers decompose a treatment’s total effect into direct and indirect pathways under a sequential ignorability assumption that links potential mediators and outcomes \citep{imai2013experimental,11_imai2010general}. Conjoint analysis, meanwhile, treats each attribute as a component ``treatment,'' isolating marginal contributions and interactions via randomized component assignment and carefully articulated exclusion restrictions \citep{egami2019causal}. In the same spirit, our paper examines counterbalanced within-subjects design to ask: what is the set of causal assumptions required to point‐identify the ATE in counterbalanced within‐subject designs? By iterating these identifying assumptions in full generality, we extend the potential‐outcomes framework to reason for arbitrary, higher‐order carryover and learning effects, rather than relying on the narrow parametric or symmetry conditions typical of two‐period crossover estimators \citep{23_AnalysisOfCrossoverTrial}.

The paper is organized as follows. Section \ref{sec: identification_assumption} outlines the identification assumptions required in a counterbalanced within-subjects design and provides illustrative examples of when these assumptions might be violated. Section \ref{sec: seq_exchangeability_alt_design} generalizes the assumptions and introduce the concept of sequential exchangeability, as well as alternative sequential designs where the assumptions are more credible. Section \ref{sec: what_can_we_do} discusses heuristic checks to verify whether violations of the assumptions occur and explores ways to recover an unbiased causal estimate using simulations under certain additional assumptions. Section \ref{sec: conclusion} concludes the paper.

\section{Identification in Counterbalanced Within-Subjects Design}
\label{sec: identification_assumption}

In this section, we formalize the identification challenges that arises from \cwsd{} by comparing it with an experiment using a \bsd{}, with the goal of recovering the Average Treatment Effect (ATE). We develop the additional assumptions required so that the estimates in a \cwsd{} will yield the same results as in a \bsd{} in this section.

\subsection{Notation}

 Assume that we have $n$ units of observations, where we have the set of associated observed outcome $\{Y_i\}_{i=0}^n$ and an treatment $\{Z_i\}_{i=0}^n$. In the potential outcomes framework, we imagine jointly observing both outcomes of treatment and control \citep{10_holland1986statistics}. We denote the potential outcomes as $(Y_i(1), Y_i(0))$, where $1$ and $0$ indicate that the unit received treatment or control, respectively. From these definitions, we can define the key identity 
\begin{equation}
    Y_i = Y_i(1)Z_i + Y_i(0)(1 - Z_i)
\end{equation}
which captures the relationship between potential outcomes and observed outcomes. Note that by this identity, $Y_i$ denotes the observed outcome of unit $i$. We also observe the pre-treatment covariates, denoted as $\{X_i\}_{i=0}^n$.

Per the standard definition of the Average Treatment Effect (ATE), it is defined as the difference in the potential outcomes of treatment and control\[\tau = \mathbb{E}[Y_i(1) - Y_i(0)]\] 
Under the standard assumptions of the Rubin Causal Model, the Average Treatment Effect is identified if \citep{10_holland1986statistics}
\begin{enumerate}
    \item $(Y_i(0), Y_i(1)) \indep Z_i \mid X_i$ \textit{(Ignorability)}
    \item $0 < \text{Pr}\{Z_i = 1 \mid X_i = x\} < 1 \quad \forall x \in X$ \textit{(Overlap)}
    \item Stable Unit Treatment Value Assumption (SUTVA) - treatment received by a unit does not affect the potential outcomes of other units and the treatment is consistent (there is no variations in the application of the treatment)
\end{enumerate}

These assumptions together imply that the unobserved counterfactuals can be recovered, in expectation, from observed outcomes conditional on covariates. This leads to the identifiability of the Average Treatment Effect in a between-subjects design where the assumptions are plausibly satisfied.

\begin{theorem}
\label{theorem: identify_bsd}
    Under the standard assumptions of the Rubin Causal Model, ATE is identifiable in a between-subjects design.
\end{theorem}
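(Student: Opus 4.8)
The plan is to show that the ATE, which is defined through unobservable potential outcomes, can be rewritten entirely in terms of the joint distribution of the observed quantities $(Y_i, Z_i, X_i)$; this is precisely what identification demands. The strategy is to treat the two terms $\mathbb{E}[Y_i(1)]$ and $\mathbb{E}[Y_i(0)]$ separately and reduce each to a functional of observables by invoking the three assumptions of the Rubin Causal Model one at a time.

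First I would split the estimand by linearity of expectation, $\tau = \mathbb{E}[Y_i(1)] - \mathbb{E}[Y_i(0)]$, and focus on the treated term. Applying the law of iterated expectations to condition on the covariates gives $\mathbb{E}[Y_i(1)] = \mathbb{E}_{X_i}\big[\mathbb{E}[Y_i(1) \mid X_i]\big]$. The next step is to use ignorability, $(Y_i(0), Y_i(1)) \indep Z_i \mid X_i$, to insert the treatment indicator into the conditioning set without altering the value: $\mathbb{E}[Y_i(1) \mid X_i] = \mathbb{E}[Y_i(1) \mid X_i, Z_i = 1]$. Here overlap does the essential work, since it guarantees that the event $\{Z_i = 1\}$ has positive probability for every $x$ in the support of $X_i$, so that the conditional expectation is well-defined rather than conditioning on a null set.

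Then I would invoke SUTVA---specifically its consistency component---to replace the potential outcome by the observed outcome on the event $\{Z_i = 1\}$, using the identity $Y_i = Y_i(1) Z_i + Y_i(0)(1 - Z_i)$, so that $\mathbb{E}[Y_i(1) \mid X_i, Z_i = 1] = \mathbb{E}[Y_i \mid X_i, Z_i = 1]$. Combining the three steps yields $\mathbb{E}[Y_i(1)] = \mathbb{E}_{X_i}\big[\mathbb{E}[Y_i \mid X_i, Z_i = 1]\big]$, which depends only on the distribution of the observed data. Repeating the identical argument for the control arm with $Z_i = 0$ gives the companion expression, and subtracting the two establishes identifiability of $\tau$.

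For a result this standard the difficulty is not analytical but conceptual, so the crux is being explicit about where each assumption enters: ignorability licenses adding $Z_i$ to the conditioning set, overlap keeps every conditional expectation well-defined across the covariate support, and consistency via SUTVA bridges potential and observed outcomes. I expect the only point requiring genuine care is stating the no-interference part of SUTVA precisely enough that the unit-level identity $Y_i = Y_i(1) Z_i + Y_i(0)(1-Z_i)$ holds in the first place. Making that assumption explicit is worthwhile here, since it is exactly the condition that will be placed under pressure once within-subjects carryover effects are introduced in the sections that follow.
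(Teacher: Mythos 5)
Your proposal is correct and follows essentially the same argument as the paper's proof: both rely on ignorability to equate conditional expectations of potential outcomes across treatment arms, the consistency identity $Y_i = Y_i(1)Z_i + Y_i(0)(1-Z_i)$ to link potential and observed outcomes, and iterated expectations over $X_i$ to recover the ATE from the conditional contrast. The only difference is direction---you work from the estimand down to observables while the paper works from the observed difference in means up to the estimand---and your explicit remark on the role of overlap in keeping the conditional expectations well-defined is a welcome addition the paper leaves implicit.
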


One of the biggest appeals of an experimental design such as a \bsd{} is that participants are randomly assigned to receive either treatment or control but not both. This removes selection bias if done properly, allowing us to reasonably assume that ignorability holds, and therefore, a simple difference in means estimator without adjustment would lead to an unbiased estimate of ATE.

\begin{remark}
    We will refer to $\tau = \mathbb{E}[Y_i(1) - Y_i(0)]$ as ATE, or Average Treatment Effect in the rest of the paper. For `time-specific' Average Treatment Effect, we will explicitly mention that we are referring to a different parameter of concern since they entail quite a different interpretation.
\end{remark}

\subsection{Structure of a Counterbalanced Within-Subjects Design}
\label{sec: assumption_cwsd}

We begin the analysis of a \cwsd{} in the case where there are two experimental conditions, treatment and control, and two discrete time periods, $t_1$ and $t_2$. In a \cwsd{}, each unit of observation is assigned to both treatment and control, but with each unit being randomly assigned to the sequence of treatment. Let $S_i$ be the sequence a unit $i$ is assigned to, where $\{S_i\}^n_{i=1}$ for when $S_i = 0$, unit $i$ receives control at $t_1$ and then receives treatment at $t_2$ and when $S_i=1$ unit $i$ receives treatment at $t_1$ and then receives control at $t_2$.

\begin{figure}
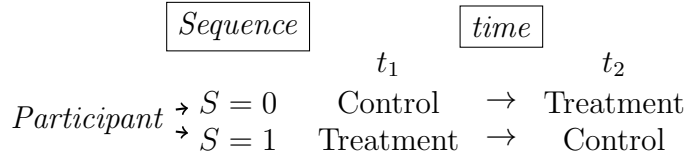

    \centering
    \tikz{
    \node (2a) at (2,0) {$t_1$};
    \node (3a) at (3,0) {};
    \node (4a) at (5,0) {$t_2$};

    \node (1b) at (0,-.5) {$S=0$};
    \node (2b) at (2,-.5) {Control};
    \node (3b) at (3.5,-.5) {$\rightarrow$};
    \node (4b) at (5,-.5) {Treatment};

    \node (1c) at (0,-1) {$S=1$};
    \node (2c) at (2,-1) {Treatment};
    \node (3c) at (3.5,-1) {$\rightarrow$};
    \node (4c) at (5,-1) {Control};

    \node [rectangle,draw] (S) at (0,0.5) {\textit{Sequence}};
    \node  [rectangle,draw] (T) at (3.5,0.5) {\textit{time}};

    \node (P) at (-2, -0.75) {\textit{Participant}};

    \path[->, thick] (P) edge (1b);
    \path[->, thick] (P) edge (1c);
    }
    \caption{An illustration of a counterbalanced within-subjects experimental design. Participants are randomized into one of the sequences, and then exposed to treatment and control at different times. }
    \label{fig:CWSD_Figure}
\end{figure}

$Z_{i,t_1} \in \{0,1\}$ indicates that unit $i$ receives control or treatment at $t_1$, where $Z_{i,t_1} = 1$ when unit $i$ receives treatment and 0 otherwise. In a counterbalanced within-subjects design, $Z_{i,t_2} = 1 - Z_{i,t_1}$ and $S_i = Z_{i,t_1}$. We can define the observed outcomes $Y_{i, t_1}^{\text{obs}}$ and $Y_{i, t_2}^{\text{obs}}$ in terms of the potential outcomes as
\begin{equation}
    Y_{i,t_1}^{\text{obs}} = Y_{i,t_1}(1)Z_{i,t_1} + Y_{i,t_1}(0)(1-Z_{i,t_1})
\end{equation}

\begin{equation}
\label{equation: t2_identity_cwsd} 
    Y_{i,t_2}^{\text{obs}} = Y_{i,t_2}(0,1)(1-Z_{i,t_1}) + Y_{i,t_2}(1,0)Z_{i,t_1}
\end{equation}

At $t_1$, the design is equivalent to a between-subjects design, with the potential outcomes at $t_1$, denoted by $Y_{i,t_1}(1)$ for treatment and $Y_{i,t_1}(0)$ for control. The average treatment effect (ATE) at $t_1$ is defined as
\[
\tau_{t_1} = \mathbb{E}\big[Y_{i,t_1}(1) - Y_{i,t_1}(0)\big].
\]
Since participants in a \cwsd{} are randomized into the sequence in which they receive the treatment, $\tau_{t_1} = \tau$ trivially as a result.

At $t_2$, \cwsd{} deviates from the standard framework. As a result of the sequential treatment design, participants switch conditions so that the observed potential outcomes are:
\begin{itemize}
    \item $Y_{i,t_2}(0,1)$ for those who received control at $t_1$ and treatment at $t_2$, and 
    \item $Y_{i,t_2}(1,0)$ for those who received treatment at $t_1$ and control at $t_2$.
\end{itemize}
Importantly, the potential outcomes $Y_{i,t_2}(0,0)$ and $Y_{i,t_2}(1,1)$ are never observed for all $i$ in a \cwsd{}. Thus, the ``average treatment effect'' at $t_2$, denoted as $\tau_{t_2}$, in a \cwsd{} is defined as
\[
\tau_{t_2} = \mathbb{E}\big[ Y_{i,t_2}(0,1) - Y_{i,t_2}(1,0) \big].
\] 
Note that this is not the typical parameter that people usually think of when they say ATE, since this is more accurate to describe it as the average treatment effect at $t_2$ given the treated was untreated at time $t_1$ and the untreated was treated at time $t_1$.

Unlike $\tau_{t_1}$, the equality $\tau_{t_2} = \tau$ does not hold automatically. In the following section, we develop the assumptions required such that 
\[
\tau_{t_2} = \mathbb{E}\big[ Y_{i,t_2}(0,1) - Y_{i,t_2}(1,0) \big] = \mathbb{E}\big[ Y_{i}(1) - Y_{i}(0) \big] = \tau
\]
since this is often the purpose of using \cwsd{} in causal inference. In the following sections, we will primarily focus on the analysis of $\tau_{t_2}$.

\begin{remark}
    Note that $\tau_{sequence} = \mathbb{E}\big[ Y_{i}(0,1) - Y_{i}(1,0) \big]$ is a valid parameter/estimand. For example, an experimenter might be interested in whether the sequence of control followed by treatment versus treatment followed by control leads to a difference in outcomes. For instance, an experimenter wishing to understand the effects of teaching New Math at an early age followed by a more standard math curriculum—compared to the reverse sequence—might investigate its effect on students' math ability might be interested in estimating $\tau_{sequence}$ without reducing it to the standard definition of ATE.
\end{remark}

\subsection{Estimating Average Treatment Effects in a Counterbalanced Within-Subjects Design}

To justify our focus on \(\tau_{t_2}\), consider a common estimation method used to estimate ATE in a counterbalanced within‐subjects design that employs the following regression specification with an OLS estimator:
\begin{equation}
\label{eq: OLS_estimate}
Y_{i,t} = \alpha_0 + \alpha_1 Z_{i,t} + \alpha_2 \mathbbm{1}\{t = t_1\} + \alpha_3 X_{i,t} + \varepsilon_{i,t}.
\end{equation}

A key motivation for using a \cwsd{} is to increase effective sample sizes by pooling observations across time periods. In this context, the pooled OLS estimate \(\hat{\alpha}_1\) is interpreted as the estimated average treatment effect in a \cwsd{}, that is, \(\hat{\tau}^{\text{CWSD}} = \hat{\alpha}_1\).

\begin{prop}
\label{prop: FWL}
The OLS estimate from Equation \eqref{eq: OLS_estimate}, \(\hat{\alpha}_1 = \hat{\tau}^{\text{CWSD}}\), can be expressed as a convex combination of the period-specific estimates:
\[
\hat{\tau}^{\text{CWSD}} = q\,\hat{\tau}_{t_1} + (1-q)\,\hat{\tau}_{t_2},
\]
with \(q \in [0,1]\).
\end{prop}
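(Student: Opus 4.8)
The proposition is essentially the Frisch--Waugh--Lovell (FWL) decomposition specialized to the pooled regression in Equation \eqref{eq: OLS_estimate}, so the plan is to isolate \(\hat{\alpha}_1\) by partialling out the remaining regressors and then split the resulting sums across the two periods. First I would collect the non-treatment regressors \(W_{i,t} = (1,\ \mathbbm{1}\{t=t_1\},\ X_{i,t})\) and, by FWL, write \(\hat{\alpha}_1\) as the simple regression coefficient of the residualized outcome on the residualized treatment,
\[
\hat{\alpha}_1 = \frac{\sum_{t}\sum_{i}\tilde{Z}_{i,t}\,\tilde{Y}_{i,t}}{\sum_{t}\sum_{i}\tilde{Z}_{i,t}^{\,2}},
\]
where \(\tilde{Z}_{i,t}\) and \(\tilde{Y}_{i,t}\) are the residuals from regressing \(Z_{i,t}\) and \(Y_{i,t}\) on \(W_{i,t}\) in the pooled sample.

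Next I would exploit the fact that the period indicator \(\mathbbm{1}\{t=t_1\}\) is among the partialled-out regressors. Because the time dummy absorbs all period-level location differences, the pooled residual \(\tilde{Z}_{i,t}\) coincides with the within-period residual of \(Z\) on a constant and the covariates, and likewise for \(\tilde{Y}_{i,t}\). This lets me break the single pooled sum into two within-period sums. Writing \(V_t = \sum_i \tilde{Z}_{i,t}^{\,2}\) and \(C_t = \sum_i \tilde{Z}_{i,t}\,\tilde{Y}_{i,t}\), and recognizing each within-period ratio \(C_t/V_t\) as the period-specific estimate \(\hat{\tau}_t\), the numerator and denominator rearrange into
\[
\hat{\alpha}_1 = \frac{C_{t_1}+C_{t_2}}{V_{t_1}+V_{t_2}} = \frac{V_{t_1}}{V_{t_1}+V_{t_2}}\,\hat{\tau}_{t_1} + \frac{V_{t_2}}{V_{t_1}+V_{t_2}}\,\hat{\tau}_{t_2}.
\]
Setting \(q = V_{t_1}/(V_{t_1}+V_{t_2})\) gives the claimed form. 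Convexity is then immediate: \(V_{t_1}\) and \(V_{t_2}\) are sums of squares, hence non-negative, so \(q \in [0,1]\) and \(1-q = V_{t_2}/(V_{t_1}+V_{t_2})\), with the denominator strictly positive whenever there is within-period treatment variation (guaranteed here, since both sequences \(S=0\) and \(S=1\) are realized).

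The step I expect to require the most care is the claim that the pooled FWL residual \(\tilde{Z}_{i,t}\) equals the within-period residualization, so that the weights genuinely split along period lines and each \(C_t/V_t\) is exactly the period-specific coefficient. When the covariates \(X_{i,t}\) vary across periods, pooled partialling of \(Z\) on \(X\) need not decompose period-by-period in general. What rescues the argument is the design itself: the sequence \(S_i\) is randomized with \(Z_{i,t_1}=S_i\) and \(Z_{i,t_2}=1-S_i\), so within each period treatment is orthogonal to the covariates (the covariate block is balanced across the two arms), and the time dummy is exactly collinear with the period partition. Under this orthogonality the cross-period contamination terms vanish and the decomposition is exact. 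I would therefore state this orthogonality explicitly as the condition under which the separation in the second step holds, noting that absent it one still obtains the same convex form but with weights defined through the pooled residuals rather than clean within-period treatment variances.
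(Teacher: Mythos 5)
Your proof follows essentially the same route as the paper's: apply FWL to partial out the time dummy and covariates from the pooled regression, split the resulting sums by period, and read off the weights as shares of the residualized treatment variation, so the decomposition and the convexity argument match. The one point where you go beyond the paper --- flagging that the pooled residualization need not coincide with within-period residualization, so that each ratio \(C_t/V_t\) equals the period-specific estimate \(\hat{\tau}_t\) only under an orthogonality condition --- is a genuine subtlety that the paper's proof sidesteps by simply \emph{defining} \(\hat{\tau}_t\) as the ratio of pooled-residual sums within period \(t\), and your explicit acknowledgment of it is a strength rather than a gap.
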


Proposition \ref{prop: FWL} shows that the pooled OLS estimate from a counterbalanced within-subjects design is a weighted average of the treatment effects from each period. This decomposition allows us to interpret the estimator as a blend of the two time-specific effects, with the weight \(q\) depending on the variance structure of the treatment assignment across periods.

\begin{cor}
If \(\mathbb{E}[\hat{\tau}_{t_2}] = \tau\), then \(\mathbb{E}[\hat{\tau}^{\text{CWSD}}] = \tau\).
\end{cor}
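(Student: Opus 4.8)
The plan is to read the corollary off directly from the convex decomposition in Proposition~\ref{prop: FWL} by taking expectations and supplying unbiasedness of the first-period estimator. First I would apply linearity of expectation to $\hat{\tau}^{\text{CWSD}} = q\,\hat{\tau}_{t_1} + (1-q)\,\hat{\tau}_{t_2}$, treating the weight $q$ as determined by the design (the treatment-assignment variance structure) rather than by the realized outcomes, to obtain $\mathbb{E}[\hat{\tau}^{\text{CWSD}}] = q\,\mathbb{E}[\hat{\tau}_{t_1}] + (1-q)\,\mathbb{E}[\hat{\tau}_{t_2}]$.

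The two ingredients I then need are the expectations of the period-specific estimators. For the first period, $\mathbb{E}[\hat{\tau}_{t_1}] = \tau$ comes essentially for free: at $t_1$ no unit has yet been exposed to any prior condition, so the design coincides with a \bsd{}, randomization of the sequence gives $\tau_{t_1} = \tau$ as noted in Section~\ref{sec: assumption_cwsd}, and Theorem~\ref{theorem: identify_bsd} delivers unbiasedness of the corresponding estimator. For the second period, $\mathbb{E}[\hat{\tau}_{t_2}] = \tau$ is exactly the hypothesis of the corollary. Substituting both into the displayed expression and using the convexity constraint $q + (1-q) = 1$ gives $q\tau + (1-q)\tau = \tau$, which is the claim.

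The main obstacle is the status of the weight $q$. If $q$ were a genuinely data-dependent random quantity correlated with the period-specific estimators, then in the rearrangement $\hat{\tau}^{\text{CWSD}} = \hat{\tau}_{t_2} + q\,(\hat{\tau}_{t_1} - \hat{\tau}_{t_2})$ the cross term would not vanish in expectation merely from $\mathbb{E}[\hat{\tau}_{t_1} - \hat{\tau}_{t_2}] = 0$, and a bias could survive. The observation that resolves this is that under randomized, ignorable assignment the weight $q$ is a function of the treatment indicators and covariates alone, hence independent of the outcome-driven fluctuations of the estimators; conditioning on the assignment therefore lets the expectation factor cleanly, after which the convexity of the weights preserves the common value $\tau$.
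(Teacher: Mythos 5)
Your argument matches the paper's: the paper proves this corollary in one line, by applying linearity of expectations to the decomposition $\hat{\tau}^{\text{CWSD}} = q\,\hat{\tau}_{t_1} + (1-q)\,\hat{\tau}_{t_2}$ from Proposition~\ref{prop: FWL}, using unbiasedness of $\hat{\tau}_{t_1}$ from randomization at $t_1$ and the hypothesis on $\hat{\tau}_{t_2}$. Your additional care about the weight $q$ being a function of the treatment indicators and covariates only --- so that conditioning on the assignment lets the expectation factor --- is a point the paper glosses over entirely, and it strengthens rather than changes the argument.
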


This comes immediately from the linearity of expectations, which justifies our subsequent focus on the analysis of $\tau_{t_2}$.

\subsection{Identification Assumptions in a Counterbalanced Within-Subjects Design}

\begin{figure}[h]
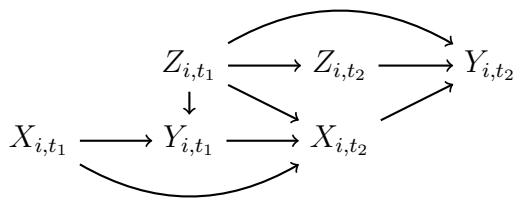

    \centering
    \tikz{
    \node (space) at (0,1) {};
    \node (space) at (0,-2) {};
    \node (Zt1) at (2,0) {$Z_{i,t_1}$};
    \node (Xt1) at (0,-1) {$X_{i,t_1}$};
    \node (Yt1) at (2,-1) {$Y_{i,t_1}$};
    
    \node (Zt2) at (4,0) {$Z_{i,t_2}$};
    \node (Xt2) at (4,-1) {$X_{i,t_2}$};
    \node (Yt2) at (6,0) {$Y_{i,t_2}$};

    \path[->, thick] (Xt1) edge (Yt1);
    \path[->, thick] (Zt1) edge (Yt1);
    \path[->, thick] (Zt1) edge (Zt2);
    \path[->, thick] (Yt1) edge (Xt2);
    \path[->, thick] (Zt2) edge (Yt2);
    \path[->, thick] (Xt2) edge (Yt2);
    \path[->, thick] (Zt1) edge[out=30, in=150] (Yt2);
    \path[->, thick] (Zt1) edge (Xt2);
    \path[->, thick] (Xt1) edge[out=-30, in=-150] (Xt2);
    }
    \caption{Potential Causal Pathways in a \Cwsd{}. Note that $X_{i,t_1}, X_{i,t_2}$ includes both \emph{observed and unobserved} covariates for concision.}
    \label{fig:CWSD_DAG}
\end{figure}

One of the major challenges in estimating the average treatment effect (ATE) in a \cwsd{} is the presence of \emph{carryover effects}. These effects refer to residual influences from the treatment (or control) administered in the first period, denoted by \(z_{t_1}\), on the outcome observed at the second period, beyond the effect of the treatment received at \(z_{t_2}\). Such residual effects can introduce unwanted variability or bias in the estimation of the treatment effect at time \(t_2\).

Figure \ref{fig:CWSD_DAG} illustrates the causal pathways in a \cwsd{}. The principal identification challenge arises at \(t_2\), where the experimental design allows treatment received at \(t_1\) to confound the effect of the treatment at \(t_2\).

The first potential source of bias is due to the \emph{direct carryover effects} of the treatment at \(t_1\) on the outcome at \(t_2\). This effect is captured by the causal pathway 
\[
Z_{i,t_1} \to Y_{i,t_2},
\]
indicating that any residual impact of the treatment (or control) at \(t_1\) can directly influence the observed outcome \(Y_{i,t_2}^{\text{obs}}\) at \(t_2\). Before we formalize the concept of direct carryover effects, we first have to define the potential outcome $Y_{i,t_2}(z_{i,t_2})$.

\begin{definition}[Potential Outcome Without Prior Exposure]
    $Y_{i,t_2}(z_{i,t_2})$ is the potential outcome of unit $i$ if unit $i$ was recruited at $t_2$ and exposed to treatment $z_{i,t_2}$ without any prior exposure at $t_1$.
\end{definition}

Note that in a \cwsd{}, this is \emph{never} observed. However, it serves as a useful counterfactual for reasoning about what the outcome would have been if unit $i$ appeared at $t_2$ and was only assigned to $z_{i,t_2}$ without any prior exposure. An implicit assumption in the way $Y_{i,t_2}(z_{i,t_2})$ is defined is that \[Y_{i,t_2}(z_{i,t_2}) \indep Z_{i,t_1} \mid Z_{i,t_2}, X_{i,t_2}\] since $Z_{i,t_1}$ `does not exist' in this counterfactual scenario. With this, we can define direct carryover effects as follows:

\begin{definition}[Direct Carryover Effects]
\label{defi:direct_carryover_effects}
For any \(i\) and for each combination of treatment assignments \(z_{t_1}, z_{t_2} \in \{0,1\}\), the \emph{direct carryover effect} of \(z_{t_1}\) on the potential outcome \(Y_{i,t_2}(z_{t_1}, z_{t_2})\) given \(z_{t_2}\) is defined as the difference between the potential outcome \(Y_{i,t_2}(z_{t_1}, z_{t_2})\) and the potential outcome without prior exposure \(Y_{i,t_2}(z_{t_2})\)
\[
C_{i, z_{t_1} \to z_{t_2}} \coloneqq Y_{i,t_2}(z_{t_1}, z_{t_2}) - Y_{i,t_2}\bigl(z_{t_2}\bigr) \mid Z_{i,t_1} = z_{t_1}, Z_{i,t_2} = z_{t_2}, X_{i,t_2}
\]
\end{definition}

In words, the direct carryover effects as stated in Definition \ref{defi:direct_carryover_effects} measures the difference between the potential outcome when an individual receives the sequence \((z_{t_1}, z_{t_2})\) and the potential outcome if only the treatment at \(t_2\) were administered. In the perspective of the potential outcomes framework, we can imagine that unit $i$ was recruited at time $t_2$ without participating at all at time $t_1$, and has the potential outcome $Y_{i,t_2}\bigl(z_{t_2}\bigr)$.

In a \cwsd{}, counterbalancing is typically implemented under the assumption that these direct carryover effects are \emph{simple}—that is, they are symmetric and cancel out in expectation \citep{02_maxwell2017designing}. Formally, we define simple direct carryover effects as follows.

\begin{assumption}[Simple Direct Carryover Effects]
\label{assumption:simple_carryover_effects}
Direct carryover effects are \emph{simple} if, for any distinct treatment assignments \(z_1, z_2 \in \{0,1\}\) where \(z_1 \neq z_2\), we have
\[
\mathbb{E}\Bigl[C_{i, z_1 \to z_2}\Bigr] = \mathbb{E}\Bigl[C_{i, z_2 \to z_1}\Bigr] = C
\]
for some constant \(C \in \mathbb{R}\).
\end{assumption}

In addition to direct effects, \emph{indirect carryover effects} may occur when the treatment at \(t_1\) influences covariates measured at \(t_2\), which in turn affect the outcome \(Y_{i,t_2}\). As shown in Figure \ref{fig:CWSD_DAG}, there are two potential indirect pathways:
\begin{enumerate}[label=(\arabic*), leftmargin=*]
    \item \(Z_{i,t_1} \to Y_{i,t_1} \to X_{i,t_2} \to Y_{i,t_2}\),
    \item \(Z_{i,t_1} \to X_{i,t_2} \to Y_{i,t_2}\).
\end{enumerate}
To ensure an unbiased estimate of the treatment effect at \(t_2\), these indirect pathways must be blocked, typically by controlling for \(X_{i,t_2}\) or with randomization at $t_2$. 

Finally, given that outcomes may change over time irrespective of treatment, it is necessary to account for time-specific shifts. In other words, even in the absence of treatment, the evolution of outcomes between \(t_1\) and \(t_2\) should be similar across treatment and control groups. This is analogous to the common parallel trends assumption in difference-in-differences analysis, and is stated formally as
\[
Y_{i,t_2}(z_{i,t_2}) = Y_{i,t_1}(z_{i,t_2}) + c, \quad \forall z_{i,t_2} \in \{0,1\}, \quad c \in \mathbb{R}.
\]
This assumption implies that any time-related change in the outcome is the same for both treatment and control conditions, differing only by a constant \(c\).

The direct and indirect carryover effects, along with the parallel trends assumption, comprise the additional assumptions required for the identification of the ATE in a \cwsd{}. We summarize these conditions in the following theorem.

\begin{theorem}
\label{theorem: identify_cwsd}
Under the standard Rubin Causal Model assumptions (SUTVA and overlap), the Average Treatment Effect (ATE) is identifiable in a \cwsd{} at $t_2$ if the following assumptions hold:
\begin{enumerate}
    \item \textbf{Simple Direct Carryover Effects:} As defined in Assumption \ref{assumption:simple_carryover_effects}.
    \item \textbf{Ignorability at Time \(t_2\):} 
    \[
    Y_{i,t_2}(z_{i,t_2}) \indep Z_{i,t_2} \mid X_{i,t_2}, \quad \forall z_{i,t_2} \in \{0,1\}.
    \]
    \item \textbf{Parallel Trends:} 
    \[
    Y_{i,t_2}(z_{i,t_2}) = Y_{i,t_1}(z_{i,t_2}) + c, \quad \forall z_{i,t_2} \in \{0,1\}, \quad c \in \mathbb{R}.
    \]
\end{enumerate}
\end{theorem}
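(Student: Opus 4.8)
The plan is to establish the population identity $\tau_{t_2} = \tau$ and then argue that $\tau_{t_2}$ is recoverable from the observed $t_2$ data under the three listed assumptions together with overlap. I would begin from the definition $\tau_{t_2} = \mathbb{E}[Y_{i,t_2}(0,1)] - \mathbb{E}[Y_{i,t_2}(1,0)]$ and apply the direct-carryover decomposition of Definition \ref{defi:direct_carryover_effects}, writing $Y_{i,t_2}(0,1) = Y_{i,t_2}(1) + C_{i,0\to1}$ and $Y_{i,t_2}(1,0) = Y_{i,t_2}(0) + C_{i,1\to0}$, understood conditionally on $X_{i,t_2}$ so that the indirect pathway through $X_{i,t_2}$ is held fixed and only the direct carryover is isolated. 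Taking expectations and subtracting, the two carryover terms cancel by Assumption \ref{assumption:simple_carryover_effects} (simple direct carryover), leaving $\tau_{t_2} = \mathbb{E}[Y_{i,t_2}(1) - Y_{i,t_2}(0)]$.

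The second step converts this $t_2$ quantity into the target ATE. Because Parallel Trends holds unit by unit, $Y_{i,t_2}(1) - Y_{i,t_2}(0) = Y_{i,t_1}(1) - Y_{i,t_1}(0)$ for every $i$, with the additive constant $c$ dropping out of the difference. Hence $\mathbb{E}[Y_{i,t_2}(1) - Y_{i,t_2}(0)] = \mathbb{E}[Y_{i,t_1}(1) - Y_{i,t_1}(0)] = \tau_{t_1}$, and since treatment at $t_1$ is randomized we already have $\tau_{t_1} = \tau$, so the two steps chain to $\tau_{t_2} = \tau$.

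The third step is identification from observables. I would form the covariate-conditional contrast of observed outcomes, noting that in a \cwsd{} the group with $Z_{i,t_2}=1$ is exactly the group with $Z_{i,t_1}=0$ (and symmetrically for $Z_{i,t_2}=0$), so the observed conditional means equal $\mathbb{E}[Y_{i,t_2}(0,1) \mid Z_{i,t_2}=1, X_{i,t_2}]$ and $\mathbb{E}[Y_{i,t_2}(1,0) \mid Z_{i,t_2}=0, X_{i,t_2}]$. Substituting the carryover decomposition, I would strip the dependence on $Z_{i,t_1}$ using the independence built into the no-prior-exposure outcome, $Y_{i,t_2}(z) \indep Z_{i,t_1} \mid Z_{i,t_2}, X_{i,t_2}$, and strip the dependence on $Z_{i,t_2}$ using Ignorability at $t_2$; overlap guarantees both conditional means are well defined. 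Simple carryover again cancels the residual direct-carryover terms, so the covariate-conditional observed contrast equals $\mathbb{E}[Y_{i,t_2}(1) - Y_{i,t_2}(0) \mid X_{i,t_2}]$, and averaging over the marginal law of $X_{i,t_2}$ and invoking the previous step returns $\tau$.

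The main obstacle is the handling of $X_{i,t_2}$, which is a post-treatment covariate affected by $Z_{i,t_1}$: conditioning on it is required to block the indirect carryover pathways and to make the direct-carryover terms cancel, yet conditioning on a post-treatment variable is precisely where hidden bias usually enters. The step that rescues the argument is Parallel Trends stated at the unit level, which makes the conditional contrast $Y_{i,t_2}(1)-Y_{i,t_2}(0)$ invariant to the conditioning event, so that integrating it over the distribution of $X_{i,t_2}$ recovers the unconditional ATE rather than a covariate-reweighted version of it. I would therefore be careful to read both the carryover decomposition and Assumption \ref{assumption:simple_carryover_effects} at the level of conditional expectations given $X_{i,t_2}$, so that the cancellation is valid pointwise in $x$ before the final integration.
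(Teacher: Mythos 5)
Your proof is correct and follows essentially the same route as the paper's: decompose the observed $t_2$ contrast via the direct-carryover definition, cancel the carryover terms by Assumption \ref{assumption:simple_carryover_effects}, strip the conditioning on $Z_{i,t_2}$ by ignorability at $t_2$, integrate over $X_{i,t_2}$ by iterated expectations, and convert to $\tau$ via parallel trends. Your explicit care in reading the simple-carryover cancellation conditionally on $X_{i,t_2}$ (pointwise in $x$) is a minor clarification of a step the paper's proof performs implicitly, not a different argument.
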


\subsection{Violations of the Identification Assumptions} 

If we assume that direct carryover effects follow the ``simple" definition in Definition \ref{assumption:simple_carryover_effects}, then we have \(\mathbb{E}[C_{i,0 \to 1}] = \mathbb{E}[C_{i,1 \to 0}]\). This balance allows us to obtain an unbiased estimate of the treatment effect at time \( t_2 \), as shown in Theorem \ref{theorem: identify_cwsd}.  

However, when carryover effects are not simple — meaning \(\mathbb{E}[C_{i,0 \to 1}] \neq \mathbb{E}[C_{i,1 \to 0}]\) — counterbalancing alone cannot solve the issue. To illustrate why, consider the following (somewhat extreme) example.  

Perder Pharma, a pharmaceutical company, is developing a pain-relief drug called Moxycontin, which is opioid-based and highly effective at reducing pain (denoted as some \(\tau < 0\)). To save costs, they use a \cwsd{} to test its effect on back pain, randomizing participants into different sequences: some receive Moxycontin first, while others get it later.  

For those who receive a placebo first, the carryover effect \(C_{0 \to 1}\) is close to zero—these participants are fine until they later take Moxycontin, which genuinely helps with pain. However, the carryover effect in the other direction, \(C_{1 \to 0}\), is much larger than \(|\tau|\) and greater than zero. This is because Moxycontin is an opioid, and withdrawal causes severe pain, making things much worse for participants who are taken off the drug. The withdrawal is so extreme that the FDA ultimately concludes Moxycontin increases back pain and denies Perder's application. As a result, more than 400,000 lives are saved, and many others avoid the devastation of opioid addiction.  

Beyond direct carryover effects, \cwsd{} also lacks full randomization in the second time period, unlike \bsd{}. This is because the treatment condition in the first period, \(Z_{i,t_1}\), can indirectly influence covariates—whether observed or unobserved—at \(t_2\). If these covariates (\(X_{i,t_2}\)) are not accounted for, the assumption of ignorability at \(t_2\) may be violated. To illustrate how this assumption may fail to hold, we provide two examples (1) induced selection bias and (2) survivorship bias. 

In the case of an \emph{induced selection bias}, consider an experiment testing the effect of Ozempic on blood sugar levels. A key covariate influencing blood sugar is physical activity. If participants receiving Ozempic are motivated to increase their physical activity---perhaps due to perceived health improvements or weight loss---while the control group’s activity levels remain unchanged, this introduces imbalance. At time \(t_2\), the Ozempic group may, on average, have higher physical activity levels, which would confound the estimated treatment effect and violate the assumption of ignorability.

As for \emph{survivorship bias}, consider an experiment testing a treatment for cancer patients. At \(t_1\), participants in the control group who have worse health outcomes might die, effectively dropping out at \(t_2\), whereas some in the treatment group who would have died if assigned to control at \(t_1\) might survive and be observed at \(t_2\). This leads to participants in the treatment group being relatively healthier at \(t_2\) compared to their counterparts in the control group, assuming that the participants in the treatment group at $t_2$ does not suffer a decline of health due to not receiving the treatment at $t_1$.

Lastly, the parallel trends assumption may also be violated. Suppose a study involves participants solving complex problems in two sessions, using a counterbalanced design. If receiving an innovative problem-solving prompt in the first session not only improves performance in that session but also fundamentally changes how participants approach future problems, their learning trajectory shifts. As a result, the outcome evolution between sessions is nonparallel, violating the parallel trends assumption.  

\section{Generalization \& Sequential Exchangeability}
\label{sec: seq_exchangeability_alt_design}

In this section, we develop the concept of \emph{sequential exchangeability}, extending the original assumptions stated in Theorem \ref{theorem: identify_cwsd}. This extension is motivated by the observation that carryover effects are not unique to \cwsd{} but can also influence inference in other sequential experimental designs.

For instance, in \cite{09_Clifford}, the authors advocated for the adoption of pre-post designs, wherein participants are initially exposed to a control condition before transitioning to a between-subjects design. They posited that this approach enhances precision. However, in one of the experiments they replicated, carryover effects appeared to influence inference—not necessarily in the direction of the effect, but rather in its magnitude.

Experiment 2 in \cite{09_Clifford} replicated a landmark study on foreign aid by \citep{GilensMartin2001PIaC}, investigating the influence of factual information regarding federal budget allocation for foreign aid on public opinion. In that study, respondents were informed that spending on foreign aid constituted less than 1\% of the federal budget, after which they were asked whether foreign aid spending should be increased or decreased, with responses recorded on a five-point scale.

To compare different experimental designs, participants were randomly assigned to one of two conditions: the \textit{post-only} design or the \textit{pre-post} design. In the \textit{post-only} design, participants were randomized to receive the budgetary information, consistent with a between-subjects approach. Conversely, in the \textit{pre-post} design, participants first provided their opinion on foreign aid without the budgetary information, were subsequently exposed to the budgetary information, and finally responded to the same question again. The results of this replication indicate substantial differences in the estimated effects, as depicted in Figure \ref{fig:difference-in-estimates_clifford}.

\begin{figure}[h]
    \centering
    \includegraphics[width=0.75\linewidth]{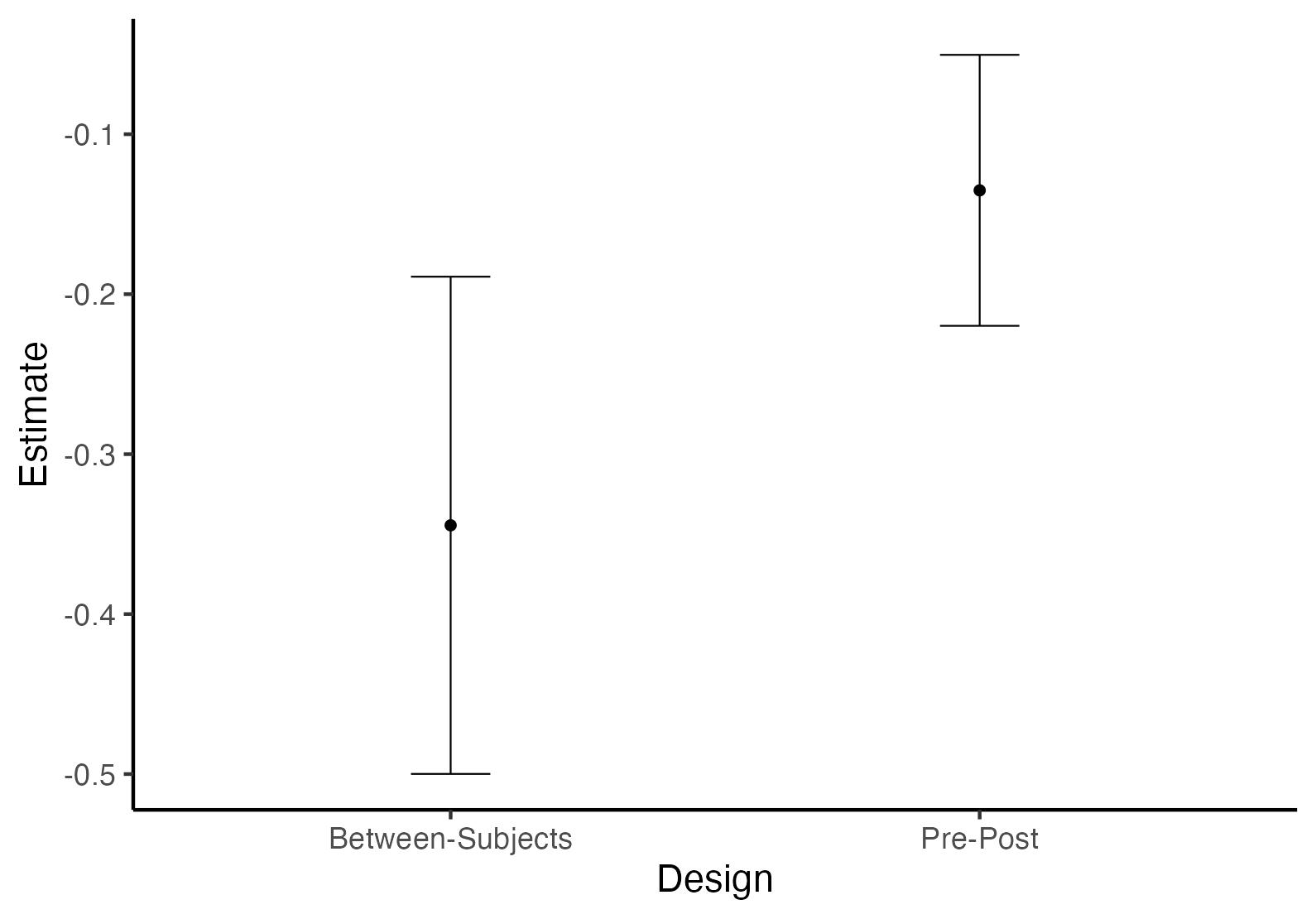}
    \caption{Experiment 2 in \cite{09_Clifford}}
    \label{fig:difference-in-estimates_clifford}
\end{figure}

This example demonstrates how experimental design choices can markedly affect inference. Specifically, it suggests that the estimated treatment effect varies with the design, potentially due to carryover effects. Formally, we observe that:
\[
 \tau^{\text{between-subjects}} = \mathbb{E}[Y_i(1) - Y_i(0)] \neq \tau^{\text{pre-post}} = \mathbb{E}[Y_i(0,1) - Y_i(0,0)]
\]
Thus, the notion of \emph{sequential exchangeability} provides a framework to examine the conditions under which these two estimands may be equivalent.

\subsection{Sequential Exchangeability}

In the \cwsd{} design, the sequence in which participants receive the treatment or control is randomized such that $Z_{i,t_1}$ determines $Z_{i,t_2}$; specifically, $Z_{i,t_1} = 1 - Z_{i,t_2}$. In contrast, the pre-post design employed in Experiment 2 in \cite{09_Clifford} randomizes participants' treatment status at time $t_2$, with all participants receiving the control in time $t_1$.

Moreover, other sequential experimental designs randomize treatment assignment in both time periods, as will be discussed further in Section \ref{section: alternative_designs}. In order to generalize, we can redefine Equation \ref{equation: t2_identity_cwsd} as follows:
\begin{equation}
\label{equation: t2_identity} 
    Y_{i,t_2}^{\text{obs}} = \sum_{z_{i,t_1} \in \{0,1\}} \sum_{z_{i,t_2} \in \{0,1\}} Y_{i,t_2}(z_{i,t_1}, z_{i,t_2})\,\mathbbm{1}\{Z_{i,t_1} = z_{i,t_1}\}\,\mathbbm{1}\{Z_{i,t_2} = z_{i,t_2}\}.
\end{equation}

\noindent
We now formally define \emph{sequential exchangeability} as follows:

\begin{assumption}(Sequential Exchangeability)
\label{assumption: sequential_exchangeability}
 A sequential experimental design with two time periods and two treatment conditions satisfies \emph{sequential exchangeability} if the following conditions are met:
    \begin{assumptionnum}
        \item \label{assumption: sequential_exchangeability: controlled_carryover} 
        \textbf{Controlled carryover effects: } 
        \[
        Y_{i,t_2}(z_{i,t_1}, z_{i,t_2}) = Y_{i,t_2}(z_{i,t_2}) \mid Z_{i,t_1} = z_{i,t_1}, Z_{i,t_2} = z_{i,t_2}, X_{i,t_2}, \quad \forall z_{i,t_1}, z_{i,t_2} \in \{0,1\}
        \] 

        \item \label{assumption: sequential_exchangeability: ignorability_t2} 
        \textbf{Ignorability at time $t_2$:} 
        \[
        Y_{i,t_2}(z_{i,t_2}) \indep  Z_{i,t_2} \mid X_{i,t_2}, \quad  \forall z_{i,t_2} \in \{0,1\}.
        \]
        \item \label{assumption: sequential_exchangeability: parallel_trend} 
        \textbf{Parallel trend assumption:} 
        \[
        Y_{i,t_2}(z_{i,t_2}) = Y_{i,t_1}(z_{i,t_2}) + c, \quad \forall z_{i,t_2} \in \{0,1\},\ c \in \mathbb{R}.
        \]
    \end{assumptionnum}
    These conditions are considered in addition to SUTVA and the overlap assumptions stipulated by the Rubin Causal Model.
\end{assumption}

\noindent 
The generalization to $n$ time periods and $n$ conditions can be found in the Appendix.

The primary distinction between \emph{sequential exchangeability} and the assumptions required to identify the average treatment effect (ATE) in \cwsd{} lies in Assumption \ref{assumption: sequential_exchangeability: controlled_carryover}. Intuitively, if carryover effects can be appropriately controlled—such that the potential outcome at $t_2$ corresponds to that which would be observed if the participant were recruited at $t_2$ without any prior intervention. We want to note however that this is a \emph{strong} assumption that can be easily violated. We explore some of the practical challenges in controlling for the direct carryover effects in Section \ref{sec: control_carryover_eff_challenges}.

Evaluating Experiment 2 in \cite{09_Clifford} with respect to Assumption \ref{assumption: sequential_exchangeability}, it is plausible that Assumptions \ref{assumption: sequential_exchangeability: ignorability_t2} and \ref{assumption: sequential_exchangeability: parallel_trend} hold, given that participants were randomized at $t_2$ and the parallel trend assumption seem plausible. However, due to the nature of the experiment, administering the question at $t_1$ may introduce carryover effects that are challenging to control for. The mechanism by which treatment at $t_1$ influences the outcome at $t_2$ is ambiguous, thereby potentially violating Assumption \ref{assumption: sequential_exchangeability: controlled_carryover}.

\subsection{Alternative Sequential Experimental Designs}
\label{section: alternative_designs}

Based on the sequential exchangeability assumptions, we propose several alternative experimental designs to \cwsd{} that may be more appropriate in contexts where the underlying assumptions are more plausible. Although these designs have been previously employed for diverse purposes as noted in the introduction, our focus here is on their utility for estimating the average treatment effect (ATE).

\begin{figure}[h]
    \centering
    \begin{subfigure}[b]{0.45\textwidth}
    \resizebox{\textwidth}{!}{%
    \fbox{
    \begin{tikzpicture}[
  grow=right,
  level 1/.style={sibling distance=4cm, level distance=2cm},
  level 2/.style={sibling distance=2cm, level distance=2cm}]
    \node {Participants}
    child { node {Treatment at \(t_1\)}
    child { node {Treatment at \(t_2 \,, Y_{i,t_2}(1,1)\)} }
    child { node {Control at \(t_2 \,, Y_{i,t_2}(1,0)\)} }
    }
    child { node {Control at \(t_1\)}
    child { node {Treatment at \(t_2 \,, Y_{i,t_2}(0,1)\)} }
    child { node {Control at \(t_2 \,, Y_{i,t_2}(0,0)\)} }
  };
\end{tikzpicture}}
}
\subcaption{Sequential Randomization}
\label{fig:alt_designs_both}
\end{subfigure}
\hspace{10mm}
\begin{subfigure}[b]{0.45\textwidth}
\resizebox{\textwidth}{!}{%
\fbox{
    \begin{tikzpicture}[
  grow=right,
  level 1/.style={sibling distance=4cm, level distance=2cm},
  level 2/.style={sibling distance=2cm, level distance=2cm}],
    \node {Participants}
    child { node {Treatment at \(t_1\)}
    child { node {} edge from parent[draw=none] }
    child { node {} edge from parent[draw=none] }
    }
    child { node {Control at \(t_1\)}
    child { node {Treatment at \(t_2 \,, Y_{i,t_2}(0,1)\)} }
    child { node {Control at \(t_2 \,, Y_{i,t_2}(0,0)\)} }
  };
\end{tikzpicture}}
}
\subcaption{Selective Sequential Randomization}
\label{fig:alt_designs_selective}
\end{subfigure}
    \caption{Alternative Experimental Designs to \Cwsd{}}
    \label{fig:alt_designs}
\end{figure}

A notable advantage of the sequential randomization design depicted in Figure \ref{fig:alt_designs_both} is its ability to incorporate \(Z_{i,t_1}\) as a control variable without introducing perfect collinearity. In traditional \cwsd{} settings, the deterministic relationship \(Z_{i,t_1} = 1 - Z_{i,t_2}\) renders it infeasible to include both treatment indicators in an OLS regression model. By contrast, the sequential randomization design circumvents this issue, thereby allowing more robust estimation strategies. Nonetheless, a key challenge remains in accurately characterizing the nature of carryover effects; the risk of misspecification looms large if these effects are not thoroughly understood.

The most significant benefit of the proposed experimental designs is the credibility of the ignorability assumption at time \(t_2\) (Assumption \ref{assumption: sequential_exchangeability: ignorability_t2}). When randomization is implemented independently in the second stage, the potential for unobserved confounding—particularly through indirect carryover effects—is substantially reduced. This is especially valuable since  experimental designs are often applied to estimate the average treatment effect, without assuming how potential covariates might confound the estimate. 

Sequential experimental designs can and should be carefully tailored to the specific context of an experiment. For instance, consider the selective sequential randomization design illustrated in Figure \ref{fig:alt_designs_selective}. If carryover effects are negligible in one of the conditions—such as when the control involves a placebo (e.g., a sugar pill assumed to have no effects of the outcome of interest)—then it may be reasonable to relax Assumption \ref{assumption: sequential_exchangeability: controlled_carryover}, where the assumption is true for \(z_{i,t_1} = 0\) but not \(z_{i,t_1} = 1\). In this scenario, the selective design could yield reliable ATE estimates. However, if the control represents a current standard treatment and the experimental arm involves a novel intervention, where both conditions may have significant carryover effects, relaxing the assumption would be inappropriate, as the differential impact of prior exposure could bias the estimation of ATE. This emphasis on accounting for the nature of the experiment and treatment, and on adapting the randomization of treatment assignment for causal identification, echoes the rationale behind dynamic treatment regimes and Sequential Multiple Assignment Randomized Trial (SMART) designs, which similarly account for sequential assignments to estimate causal effects under minimal modeling assumptions \citep{25_SMARTDesign}.

For completeness, we state the following corollary linking sequential exchangeability to the validity of the proposed experimental designs.

\begin{cor}
\label{cor: two-period-two-arm}
  In \emph{any} two‐period, two‐arm sequential experimental design, if \emph{sequential exchangeability}
  holds, then
  \[
    \tau_{t_2}^{\mathrm{seq}} \;=\; \tau.
  \]
\end{cor}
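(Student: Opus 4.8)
The plan is to factor the target equality $\tau_{t_2}^{\mathrm{seq}} = \tau$ through an intermediate, ``carryover-free'' contrast $\delta \coloneqq \mathbb{E}[Y_{i,t_2}(1) - Y_{i,t_2}(0)]$, proving two links: (i) the covariate-adjusted observed contrast at $t_2$ equals $\delta$, using controlled carryover and ignorability at $t_2$; and (ii) $\delta$ equals the standard ATE $\tau$, using parallel trends together with the fact that no treatment precedes $t_1$. The first link is where the generality lives: it must absorb whatever (possibly design-dependent) relationship holds between $Z_{i,t_1}$ and $Z_{i,t_2}$, so that the argument never invokes the deterministic link $Z_{i,t_2} = 1 - Z_{i,t_1}$ special to the \cwsd{}. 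This is precisely what lets the conclusion hold for \emph{any} two-period, two-arm design.

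For link (i), I would begin from the generalized identity in Equation~\eqref{equation: t2_identity}. Controlled carryover (Assumption~\ref{assumption: sequential_exchangeability: controlled_carryover}) states that within each stratum $\{Z_{i,t_1}=z_{t_1}, Z_{i,t_2}=z_{t_2}, X_{i,t_2}\}$ the two-argument potential outcome collapses to $Y_{i,t_2}(z_{t_2})$. Substituting this into Equation~\eqref{equation: t2_identity} and noting that the indicators pick out exactly the realized assignment, the double sum telescopes to the almost-sure identity $Y_{i,t_2}^{\mathrm{obs}} = Y_{i,t_2}(Z_{i,t_2})$, in which the $z_{t_1}$ index has dropped out entirely. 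The $t_2$ problem is thereby reduced to a one-period identification problem with treatment $Z_{i,t_2}$. I would then condition on $X_{i,t_2}$, apply ignorability at $t_2$ (Assumption~\ref{assumption: sequential_exchangeability: ignorability_t2}) to replace $\mathbb{E}[Y_{i,t_2}(z_{t_2}) \mid Z_{i,t_2}=z_{t_2}, X_{i,t_2}]$ with $\mathbb{E}[Y_{i,t_2}(z_{t_2}) \mid X_{i,t_2}]$ (overlap guarantees both conditioning events have positive probability), and average over $X_{i,t_2}$ by the law of total expectation. This yields $\tau_{t_2}^{\mathrm{seq}} = \delta$.

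For link (ii), I would invoke parallel trends (Assumption~\ref{assumption: sequential_exchangeability: parallel_trend}) to write $Y_{i,t_2}(z) = Y_{i,t_1}(z) + c$ for $z \in \{0,1\}$; differencing the two levels cancels $c$, giving $\delta = \mathbb{E}[Y_{i,t_1}(1) - Y_{i,t_1}(0)]$. Because no treatment is administered before $t_1$, the single-argument $t_1$ potential outcome coincides with the standard potential outcome, $Y_{i,t_1}(z) = Y_i(z)$ for each $z$; this is the structural content of the earlier remark that $\tau_{t_1} = \tau$ holds trivially (and is the estimand-level counterpart of Theorem~\ref{theorem: identify_bsd}), an equality of potential outcomes that does \emph{not} itself require $t_1$ to be randomized. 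Hence $\delta = \mathbb{E}[Y_i(1) - Y_i(0)] = \tau$, closing the chain $\tau_{t_2}^{\mathrm{seq}} = \delta = \tau$.

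The step I expect to be the main obstacle is making link (i) rigorous for a \emph{general} assignment mechanism. In the \cwsd{} the observed-at-$t_2$ groups are pure strata of $z_{t_1}$, whereas in designs that randomize at both periods the conditional expectation $\mathbb{E}[Y_{i,t_2}^{\mathrm{obs}} \mid Z_{i,t_2}=z, X_{i,t_2}]$ is a priori a $Z_{i,t_1}$-weighted mixture of two-argument potential outcomes. The crux is that controlled carryover removes the $z_{t_1}$ dependence \emph{inside each stratum}, so the mixture weights become irrelevant; I would therefore state the reduction as an almost-sure identity \emph{before} taking any expectations, so that no assumption about the joint law of $(Z_{i,t_1}, Z_{i,t_2})$ is smuggled in. A secondary point worth spelling out is the implicit independence built into the definition of $Y_{i,t_2}(z_{t_2})$, namely $Y_{i,t_2}(z_{t_2}) \indep Z_{i,t_1} \mid Z_{i,t_2}, X_{i,t_2}$, which must be compatible with Assumption~\ref{assumption: sequential_exchangeability: ignorability_t2} for the conditional-expectation manipulations to be valid.
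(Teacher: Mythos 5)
Your proposal is correct and follows essentially the same route as the paper's proof: use the generalized identity in Equation~\eqref{equation: t2_identity} with controlled carryover and ignorability at $t_2$ to identify $\mathbb{E}[Y_{i,t_2}(1)-Y_{i,t_2}(0)\mid X_{i,t_2}]$, average over $X_{i,t_2}$, and then apply parallel trends to equate this with $\tau$. Your explicit handling of the marginalization over $Z_{i,t_1}$ (stating the collapse $Y_{i,t_2}^{\mathrm{obs}}=Y_{i,t_2}(Z_{i,t_2})$ as an almost-sure identity before taking expectations) is a slightly more careful rendering of a step the paper passes over quickly, but it is the same argument.
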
 

The proof proceeds analogously to that of Theorem \ref{theorem: identify_cwsd}.

\begin{remark}
  Equation \ref{equation: t2_identity} in fact provides the most general two‐period, two‐arm sequential‐outcome set‐up that includes counterbalanced within‐subjects designa, the pre–post designs, sequential randomization designs,  selective sequential randomization designs, if \emph{sequential exchangeability}
  holds, then
  \[
    \tau_{t_2}^{\mathrm{seq}} \;=\; \tau.
  \] Hence whenever the observed second‐period outcome in your design can be written in this form, the proof of Corollary \ref{cor: two-period-two-arm} goes through verbatim under the sequential exchangeability assumption. 
\end{remark}

\subsection{Controlling for Carryover Effects in Practice}
\label{sec: control_carryover_eff_challenges}

Carryover effects from the treatment at time $t_1$ can be viewed as a form of omitted variable bias—essentially, a `covariate' that needs to be properly accounted for. As with any covariate, misspecifying its functional form can lead to biased estimates. If carryover effects are linearly separable, then a fixed-effects model that includes treatment status at $t_1$ as a control is typically sufficient. However, when carryover effects interact with the subsequent treatment at $t_2$, estimation becomes more complex.

Consider two illustrative models where carryover effects cannot be fully addressed by simple fixed-effects approaches. If the carryover effects from $t_1$ interacts outcomes at $t_2$ in an additive way, we can model the outcome as:

\begin{equation}
\label{eq: interaction_carryover}
Y_{i,t_2} 
= \beta_0 + \beta_1 Z_{i,t_2} 
+ \gamma (Z_{i,t_1} \cdot Z_{i,t_2}) 
+ \beta_2 X_{i, t_2} 
+ \epsilon_{i, t_2}, 
\quad \epsilon_{i, t_2} \sim N(0,\sigma^2).
\end{equation}

Here, the effect of the second treatment $Z_{i,t_2}$ depends additively on whether the first treatment $Z_{i,t_1}$ was received, through the interaction term $\gamma (Z_{i,t_1} \cdot Z_{i,t_2})$.

If, instead, the prior treatment amplifies or dampens the effect of the later treatment in a nonlinear way—as in the following model:
\begin{equation}
\label{eq: compounding_carryover}
Y_{i,t_2} 
= \beta_0 
+ \beta_1^{1 + \gamma Z_{i,t_1}} Z_{i,t_2} 
+ \beta_2 X_{i,t_2} 
+ \epsilon_{i, t_2}, 
\quad \epsilon_{i, t_2} \sim N(0,\sigma^2).
\end{equation}
In this formulation, the strength of the effect of \( Z_{i,t_2} \) is modulated by prior exposure \( Z_{i,t_1} \) through an exponentiated function. When \( \gamma > 0 \), prior treatment increases the effective coefficient on \( Z_{i,t_2} \), amplifying its influence on the outcome. Conversely, when \( \gamma < 0 \), the coefficient is dampened.

\begin{figure}[h]
    \centering
    \includegraphics[width=0.95\linewidth]{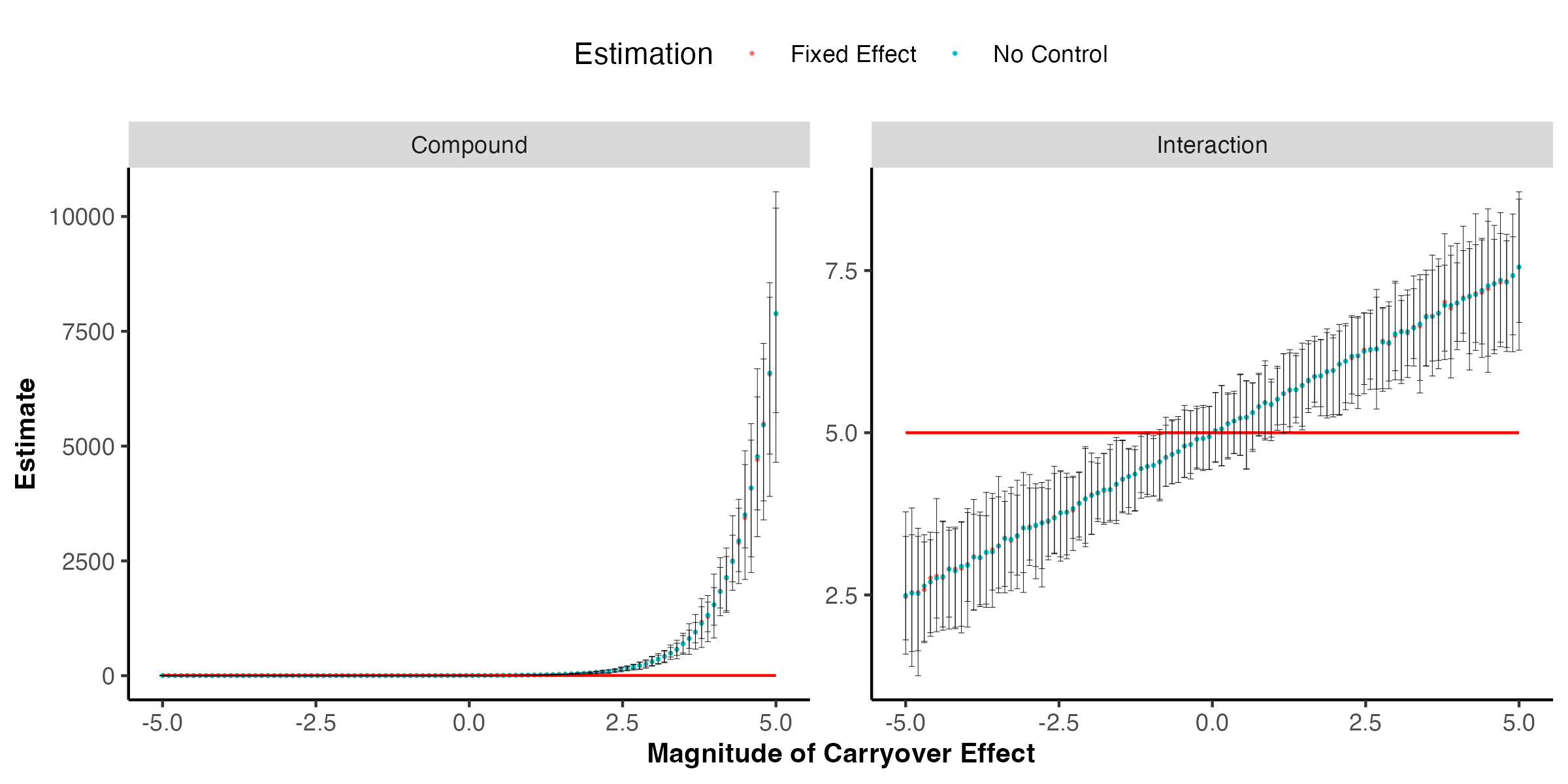}
    \caption{Simulated estimates of different structures of carryover effects of treatment at $t_1$ on the treatment at $t_2$. On the left, we have the compounding effect (Equation \ref{eq: compounding_carryover}), whereas on the right, we have the simple interaction effect (Equation \ref{eq: interaction_carryover}) between $Z_{t_1}$ and $Z_{t_2}$. The true average treatment effect, $\tau_{\text{ATE}}$, is set at 5 (red horizontal line).}
    \label{fig:simulation_control_carryover}
\end{figure}

Figure \ref{fig:simulation_control_carryover} shows the bias of the estimated treatment effect, $\beta_1$, at $t_2$ varying the magnitude of the carryover effect, $\gamma$, as specified in Model 1 and Model 2. These two examples show how misspecification or ignoring the correct functional form of carryover can bias estimates. In Model 1, adding a simple interaction term can often account for such dependence, whereas in Model 2 requires a different approach to properly estimate the true effect of the second treatment.

\cite{23_AnalysisOfCrossoverTrial} provides an analysis of estimating \cwsd{} (referred to as 2×2 crossover trials) in the case of no carryover effects, as well as additive and interactive carryover effects, using OLS and REML estimators. However, as noted, the functional form of the carryover effect can be difficult to discern, and misspecification can lead to biased estimates.
\section{Strategies for Addressing Identification Challenges}
\label{sec: what_can_we_do}

Although counterbalanced within‐subjects designs present some identification challenges, there are practical strategies available to researchers to address these problems. This section outlines five approaches—Fisher's Exact Test, heuristic checks, the use of a washout period, covariate adjustment, and using Sequential Randomization as described in Section \ref{sec: seq_exchangeability_alt_design}—that can help detect violations of the assumptions or satisfy them needed for consistent estimation of the causal effect.

\subsection{Fisher's Exact Test}

Fisher's Exact Test is a nonparametric method widely used to assess the independence between two categorical variables, making it particularly useful in small sample settings. In a \cwsd{}, if there is concern that assumption violations may be contaminating the analysis, one strategy is to discard data from $t_2$ and focus exclusively on $t_1$. By doing so, researchers can use Fisher's Exact Test at $t_1$ to determine whether the treatment assignment is independent of the observed outcomes.

However, this approach does not address the issue of limited overlap and lack of common support, which may arise with the sample sizes typically used in \cwsd{}. Therefore, while it is a viable option, the experimenter may still need to ensure covariate balance at $t_1$, and its applicability may be limited in certain contexts.

\subsection{Heuristics Checks}

A straightforward diagnostic is to compare the estimated treatment effect in the first period, \(\hat{\tau}_{t_1}\), to the estimated effect in the second period, \(\hat{\tau}_{t_2}\). A marked discrepancy between these estimates may indicate violations of the no‐carryover or no‐time‐drift assumptions (analogous to pre‐trend checks in a difference‐in‐differences setting).

When large differences emerge, they serve as a clear warning sign that carryover or other time‐based confounders may be at play.  However, consistency between \(\hat{\tau}_{t_1}\) and \(\hat{\tau}_{t_2}\) does not guarantee the validity of the requisite assumptions: random sampling variability could mask carryover effects, while strongly offsetting biases might also produce superficially similar estimates. Inconsistencies between the estimates also do not imply that the assumptions are violated as differences might arise due to sampling variability. As such, this check is neither necessary (i.e., estimates may differ for benign reasons) nor sufficient (i.e., matching estimates could still reflect compensating biases).

Despite these limitations, heuristic checks remain a useful first step in diagnosing possible threats to identification and can guide further analyses. However, given that \cwsd{} is often employed due to a small sample size, in practice, it can be difficult to determine that difference in estimates is driven by the violation of the assumptions, or sampling variation, or both.

\subsection{Washout Period}

A more targeted design‐based remedy is to incorporate a \textit{washout period} between treatments. This approach is particularly relevant when the treatment is expected to have direct physiological or psychological effects that persist over time, but decay with a sufficient gap between $t_1$ and $t_2$. By lengthening the gap between \(t_1\) and \(t_2\), the influence of the initial treatment on the second‐period outcome may subside, thereby reducing or eliminating any carryover effects.  

 Researchers should draw on subject‐matter knowledge (e.g., pharmacokinetics, psychological adaptation timelines) to determine the length of time needed for treatment effects to dissipate. Even with an adequate washout period, however, confounding due to the effect of treatment on the covariates remains a concern. If, for instance, the initial treatment induces behavioral or physiological changes that persist, mere passage of time may not fully remove indirect carryover pathways. Thus, the potential for lingering confounders should still be carefully addressed (e.g., by measuring and adjusting for relevant characteristics at \(t_2\)).

\subsection{Controlling for Changes in Covariates}

\begin{figure}[h]
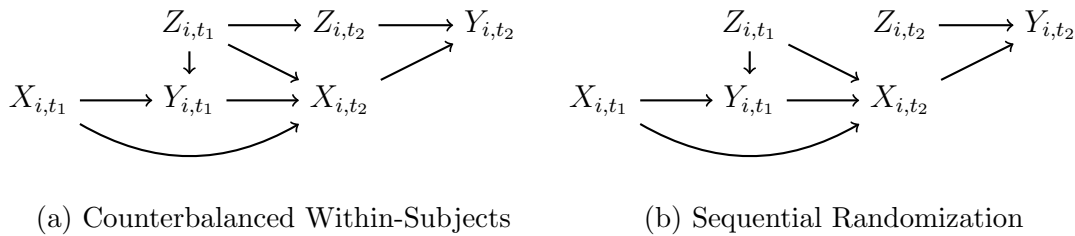

    \centering
    \begin{subfigure}[b]{0.45\textwidth}
    \tikz{
        \node (space) at (0,1) {};
        \node (space) at (0,-2) {};
        \node (Zt1) at (2,0) {$Z_{i,t_1}$};
        \node (Xt1) at (0,-1) {$X_{i,t_1}$};
        \node (Yt1) at (2,-1) {$Y_{i,t_1}$};
        
        \node (Zt2) at (4,0) {$Z_{i,t_2}$};
        \node (Xt2) at (4,-1) {$X_{i,t_2}$};
        \node (Yt2) at (6,0) {$Y_{i,t_2}$};
    
        \path[->, thick] (Xt1) edge (Yt1);
        \path[->, thick] (Zt1) edge (Yt1);
        \path[->, thick] (Zt1) edge (Zt2);
        \path[->, thick] (Yt1) edge (Xt2);
        \path[->, thick] (Zt2) edge (Yt2);
        \path[->, thick] (Xt2) edge (Yt2);
        \path[->, thick] (Zt1) edge (Xt2);
        \path[->, thick] (Xt1) edge[out=-30, in=-150] (Xt2);
    }
    \subcaption{Counterbalanced Within-Subjects}
    \label{fig: cwsd_dag_no_dir}
    \end{subfigure} \vspace{1mm}
    \begin{subfigure}[b]{0.45\textwidth}
    \tikz{
        \node (space) at (0,1) {};
        \node (space) at (0,-2) {};
        \node (Zt1) at (2,0) {$Z_{i,t_1}$};
        \node (Xt1) at (0,-1) {$X_{i,t_1}$};
        \node (Yt1) at (2,-1) {$Y_{i,t_1}$};
        
        \node (Zt2) at (4,0) {$Z_{i,t_2}$};
        \node (Xt2) at (4,-1) {$X_{i,t_2}$};
        \node (Yt2) at (6,0) {$Y_{i,t_2}$};
    
        \path[->, thick] (Xt1) edge (Yt1);
        \path[->, thick] (Zt1) edge (Yt1);
        \path[->, thick] (Yt1) edge (Xt2);
        \path[->, thick] (Zt2) edge (Yt2);
        \path[->, thick] (Xt2) edge (Yt2);
        \path[->, thick] (Zt1) edge (Xt2);
        \path[->, thick] (Xt1) edge[out=-30, in=-150] (Xt2);
    }
    \subcaption{Sequential Randomization}
    \label{fig: sr_dag_no_dir}
    \end{subfigure}
    \caption{Assuming that all carryover effects are mediated by the covariates at the second time period, failing to control for $X_{i,t_2}$ will lead to biased estimates in a \cwsd{}. However, this is not true for Sequential Randomization. Note that $X_{i,t_1}, X_{i,t_2}$ includes both observed and unobserved covariates for concision.}
    \label{fig:CWSD_DAG_Without_Direct}
\end{figure}

If the carryover effect is entirely mediated by observable covariates at the second time period, \(X_{i,t_2}\) (i.e., if all carryover effects operate indirectly through these covariates), then adjusting for \(X_{i,t_2}\) fully accounts for any such indirect influence. Formally, controlling for the covariates at $t_2$ satisfies Assumption \ref{assumption: sequential_exchangeability: ignorability_t2}.

To illustrate with the Ozempic experiment, suppose that any direct carryover effects of the treatment on blood sugar levels at \(t_1\) become negligible following an appropriate washout period; then the assumption of no direct carryover effects becomes more credible. However, if participants assigned to Ozempic continue to maintain elevated physical activity relative to those in the control group, a \cwsd{} will violate Assumption \ref{assumption: sequential_exchangeability: ignorability_t2}.

To account for this in a \cwsd{}, we can adapt methods from observational causal inference by adjusting for $X_{i,t_2}$ using regression models or propensity score weighting. These approaches block indirect pathways through which prior treatment might influence the outcome. In contrast, a simple time dummy only captures average period effects and fails to account for treatment–covariate interactions. Nevertheless, while such adjustments are theoretically viable in a \cwsd{}, they may be difficult to implement in practice, and model misspecification can introduce bias. Therefore, even when direct carryover effects seem unlikely or negligible, we recommend using sequential randomization as described in Section \ref{section: alternative_designs}.

To demonstrate this, we simulated a \cwsd{} where participants were randomized to sequences, and varied the effect of the effect of $Z_{t_1}$ on $X_{t_2}$ from –10 to 10. We estimated the treatment effect using four methods: a correctly specified model, a propensity score method, a fixed time effect model, and an unadjusted model. 

As shown in Figure \ref{fig:simulation_cwsd}, only the correctly specified and propensity score models consistently recovered the true treatment effect. The fixed effect and unadjusted models produced biased estimates, with bias increasing alongside the treatment’s effect on the covariates. We repeated the simulation using sequential randomization, which yielded unbiased estimates regardless of model (mis)specification.

\begin{figure}[h]
    \centering
    \includegraphics[width=0.95\linewidth]{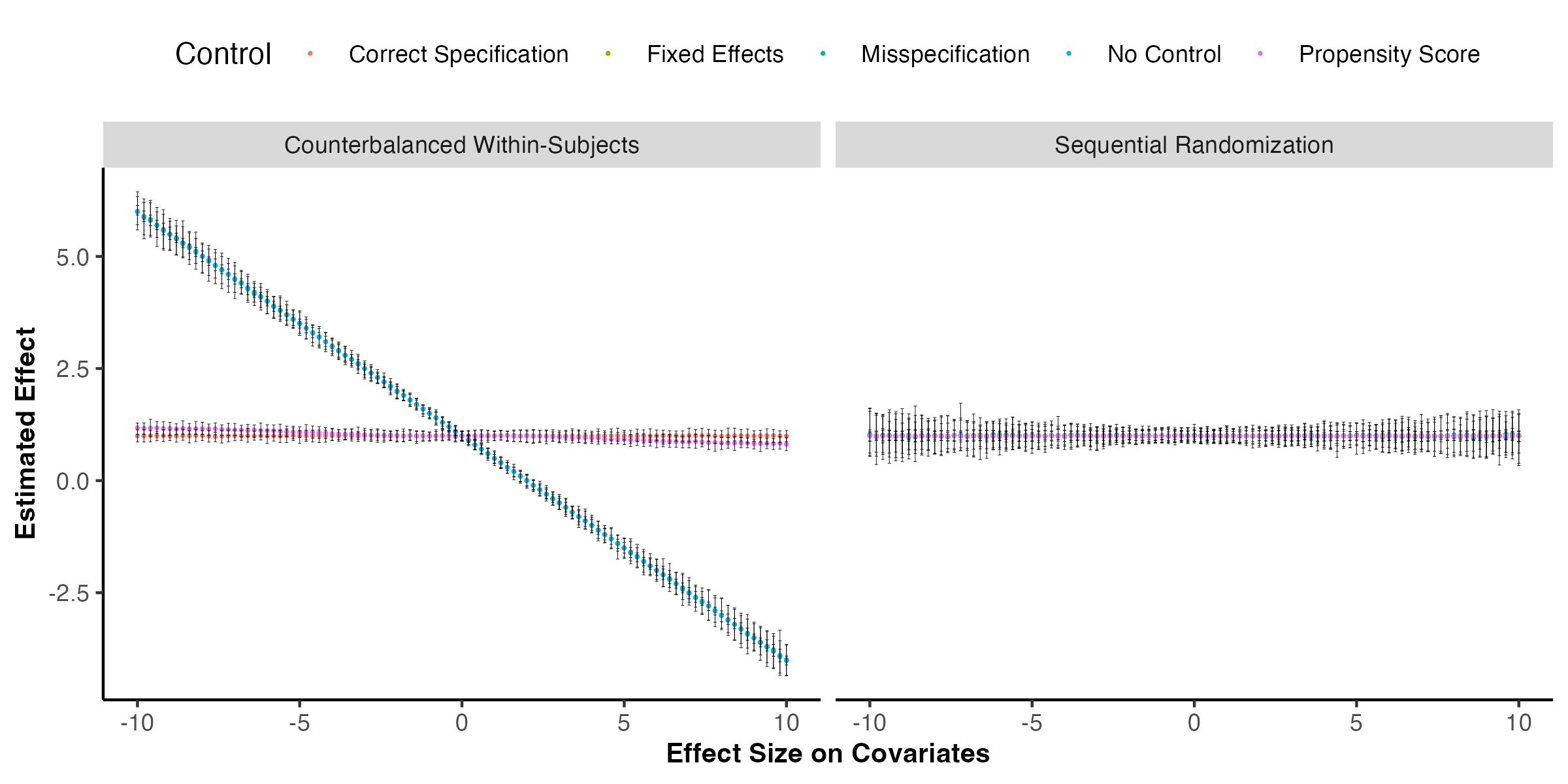}
    \caption{Simulated estimates of the treatment effect \(\tau = 1\) obtained using different estimation methods. In the \cwsd{} simulation, both the Correct Specification and Propensity Score approaches successfully recovered the unbiased treatment effect, while the Fixed Effects and No Control methods introduced bias. The degree of bias in the Fixed Effects and No Control approaches increased as the effect size on the covariates grew. Nonetheless, a sequential randomization design consistently recovered the average treatment effect (ATE) regardless of the specification.}
    \label{fig:simulation_cwsd}
\end{figure}

Although adjusting for covariates at $t_2$ can help block indirect carryover pathways, care must be taken not to condition on previous treatment outcomes such as \( Y_{i,t_1} \). In particular, consider the path \[ Z_{i,t_2} \leftarrow Z_{i,t_1} \rightarrow Y_{i,t_1} \leftarrow X_{i,t_1} \rightarrow X_{i,t_2} \rightarrow Y_{i,t_2} \] $Y_{i, t_1}$ is a collider and conditioning on it opens a non-causal backdoor path from \( Z_{i,t_1} \) to \( Y_{i,t_2} \). This violates the backdoor criterion and introduces what is known as collider bias or \textit{M}-bias \citep{21_M_Bias, 22_good_and_bad_crtl}. As a result, estimates of the causal effect may become biased. Thus, in a \cwsd{}, treatment outcomes at $t_1$ should not be conditioned on when estimating causal effects of \( Z_{i,t_2} \) on $Y_{i,t_2}$.
\section{Discussion and Conclusion}
\label{sec: conclusion}

Our analysis of counterbalanced within-subjects designs shows the potential challenges in using this methodology for causal inference. While this design can offer advantages—such as increased statistical power, more precise individual-level comparisons, and efficient data collection—it also introduces complications that can bias treatment effect estimates if not carefully accounted for. The primary concern stems from carryover effects, which may persist asymmetrically across treatment conditions, and violations of stability over treatment and time, which can confound treatment estimates across sequential periods. 

Through a formal treatment of these issues within the potential outcomes framework, we demonstrate that counterbalancing does not inherently eliminate biases introduced by differential carryover effects. Our findings suggest that, although counterbalancing is often assumed to mitigate such issues, it relies on the strong and unverifiable assumption that carryover effects are symmetric and cancel out. When this assumption does not hold treatment effect estimates may be substantially biased.

Furthermore, issues that arise in \cwsd{} also appear in other sequential experimental designs where the estimand of interest is the ATE. We formalize the notion of \emph{sequential exchangeability}, which applies to alternative experimental designs to \cwsd{} where the assumptions are more plausibly met. However, we caution that despite the improvements, the difficulty of controlling for carryover effects can substantially hinder the estimation of the ATE and is not a perfect substitute for \bsd{}.

To address these concerns, we propose several practical strategies that researchers can employ when implementing counterbalanced within-subjects designs. First, heuristic checks comparing treatment effects across different time periods can serve as an initial diagnostic tool to detect potential violations of key assumptions. Second, incorporating a washout period between conditions can reduce the impact of lingering treatment effects, particularly in studies involving physiological or psychological interventions. Finally, adjusting for covariate changes between periods can help correct for biases arising from time-dependent confounders that may be influenced by prior treatment exposure, assuming that all carryover effects can be explained indirectly via the covariates.

Despite its limitations, counterbalanced within-subjects design can be a valuable tool for experimental research  participant recruitment is constrained. However, researchers should assess whether \cwsd{} is the most appropriate design choice for their specific study context, or whether alternative methodologies—such as between-subjects designs or sequential randomization—might provide more reliable estimates of causal effects.

\bibliographystyle{chicago}

\bibliography{cite}

\begin{thebibliography}{}

\bibitem[\protect\citeauthoryear{Almirall, Nahum-Shani, Sherwood, and
  Murphy}{Almirall et~al.}{2014}]{25_SMARTDesign}
Almirall, D., I.~Nahum-Shani, N.~E. Sherwood, and S.~A. Murphy (2014, Sep).
\newblock Introduction to smart designs for the development of adaptive
  interventions: with application to weight loss research.
\newblock {\em Translational Behavioral Medicine\/}~{\em 4\/}(3), 260--274.

\bibitem[\protect\citeauthoryear{Charness, Gneezy, and Kuhn}{Charness
  et~al.}{2012}]{09_Charness}
Charness, G., U.~Gneezy, and M.~A. Kuhn (2012).
\newblock Experimental methods: Between-subject and within-subject design.
\newblock {\em Journal of Economic Behavior \& Organization\/}~{\em 81\/}(1),
  1--8.

\bibitem[\protect\citeauthoryear{Cinelli, Forney, and Pearl}{Cinelli
  et~al.}{2024}]{22_good_and_bad_crtl}
Cinelli, C., A.~Forney, and J.~Pearl (2024).
\newblock A crash course in good and bad controls.
\newblock {\em Sociological Methods and Research\/}~{\em 53\/}(3), 1071--1104.

\bibitem[\protect\citeauthoryear{Clifford, Sheagley, and Piston}{Clifford
  et~al.}{2021}]{09_Clifford}
Clifford, S., G.~Sheagley, and S.~Piston (2021).
\newblock Increasing precision without altering treatment effects: Repeated
  measures designs in survey experiments.
\newblock {\em American Political Science Review\/}~{\em 115\/}(3), 1048--1065.

\bibitem[\protect\citeauthoryear{Ding and Miratrix}{Ding and
  Miratrix}{2015}]{21_M_Bias}
Ding, P. and L.~W. Miratrix (2015).
\newblock To adjust or not to adjust? sensitivity analysis of m-bias and
  butterfly-bias.
\newblock {\em Journal of Causal Inference\/}~{\em 3\/}(1), 41--57.

\bibitem[\protect\citeauthoryear{Egami and Imai}{Egami and
  Imai}{2019}]{egami2019causal}
Egami, N. and K.~Imai (2019).
\newblock Causal interaction in factorial experiments: Application to conjoint
  analysis.
\newblock {\em Journal of the American Statistical Association\/}.

\bibitem[\protect\citeauthoryear{Erlebacher}{Erlebacher}{1977}]{ErlebacherAlbert1977Daao}
Erlebacher, A. (1977).
\newblock Design and analysis of experiments contrasting the within- and
  between-subjects manipulation of the independent variable.
\newblock {\em Psychological bulletin\/}~{\em 84\/}(2), 212--219.

\bibitem[\protect\citeauthoryear{Gilens}{Gilens}{2001}]{GilensMartin2001PIaC}
Gilens, M. (2001).
\newblock Political ignorance and collective policy preferences.
\newblock {\em American Political Science Review\/}~{\em 95\/}(2), 379--396.

\bibitem[\protect\citeauthoryear{Holland}{Holland}{1986}]{10_holland1986statistics}
Holland, P.~W. (1986).
\newblock Statistics and causal inference.
\newblock {\em Journal of the American statistical Association\/}~{\em
  81\/}(396), 945--960.

\bibitem[\protect\citeauthoryear{Imai, Keele, and Tingley}{Imai
  et~al.}{2010}]{11_imai2010general}
Imai, K., L.~Keele, and D.~Tingley (2010).
\newblock A general approach to causal mediation analysis.
\newblock {\em Psychological methods\/}~{\em 15\/}(4), 309.

\bibitem[\protect\citeauthoryear{Imai, Tingley, and Yamamoto}{Imai
  et~al.}{2013}]{imai2013experimental}
Imai, K., D.~Tingley, and T.~Yamamoto (2013).
\newblock Experimental designs for identifying causal mechanisms.
\newblock {\em Journal of the Royal Statistical Society Series A: Statistics in
  Society\/}~{\em 176\/}(1), 5--51.

\bibitem[\protect\citeauthoryear{Kenward and Jones}{Kenward and
  Jones}{2007}]{23_AnalysisOfCrossoverTrial}
Kenward, M.~G. and B.~Jones (2007).
\newblock 15 design and analysis of cross-over trials.
\newblock In C.~Rao, J.~Miller, and D.~Rao (Eds.), {\em Epidemiology and
  Medical Statistics}, Volume~27 of {\em Handbook of Statistics}, pp.\
  464--490. Elsevier.

\bibitem[\protect\citeauthoryear{Keren}{Keren}{2014}]{keren2014between}
Keren, G. (2014).
\newblock Between-or within-subjects design: A methodological dilemma.
\newblock {\em A handbook for data analysis in the behaviorial sciences\/}~{\em
  1}, 257--272.

\bibitem[\protect\citeauthoryear{Lane, Boussioux, Ayoubi, Chen, Lin, Spens,
  Wagh, and Wang}{Lane et~al.}{2024}]{Lane2024}
Lane, J.~N., L.~Boussioux, C.~Ayoubi, Y.~H. Chen, C.~Lin, R.~Spens, P.~Wagh,
  and P.-H. Wang (2024, August).
\newblock The narrative ai advantage? a field experiment on generative
  ai-augmented evaluations of early-stage innovations.
\newblock Working Paper 25-001, Harvard Business School.
\newblock (Revised August 2024.).

\bibitem[\protect\citeauthoryear{Matthews}{Matthews}{1994}]{matthew_multiperiod_crossover}
Matthews, J. (1994).
\newblock Multi-period crossover trials.
\newblock {\em Statistical Methods in Medical Research\/}~{\em 3\/}(4),
  383--405.
\newblock PMID: 7889228.

\bibitem[\protect\citeauthoryear{Maxwell, Delaney, and Kelley}{Maxwell
  et~al.}{2017}]{02_maxwell2017designing}
Maxwell, S.~E., H.~D. Delaney, and K.~Kelley (2017).
\newblock {\em Designing experiments and analyzing data: A model comparison
  perspective}.
\newblock Routledge.

\bibitem[\protect\citeauthoryear{McElreath}{McElreath}{2018}]{mcelreath2018statistical}
McElreath, R. (2018).
\newblock {\em Statistical rethinking: A Bayesian course with examples in R and
  Stan}.
\newblock Chapman and Hall/CRC.

\bibitem[\protect\citeauthoryear{Sarkies, Skinner, Bowles, Morris, Williams,
  O'Brien, Bardoel, Martin, Holland, Carey, White, and Haines}{Sarkies
  et~al.}{2019}]{Sarkies2019}
Sarkies, M.~N., E.~H. Skinner, K.-A. Bowles, M.~E. Morris, C.~Williams,
  L.~O'Brien, A.~Bardoel, J.~Martin, A.~E. Holland, L.~Carey, J.~White, and
  T.~P. Haines (2019, May).
\newblock A novel counterbalanced implementation study design: methodological
  description and application to implementation research.
\newblock {\em Implementation Science\/}~{\em 14\/}(1), 45.

\bibitem[\protect\citeauthoryear{Sibbald and Roberts}{Sibbald and
  Roberts}{1998}]{sibbald1998understanding}
Sibbald, B. and C.~Roberts (1998).
\newblock Understanding controlled trials crossover trials.
\newblock {\em Bmj\/}~{\em 316\/}(7146), 1719--1720.

\bibitem[\protect\citeauthoryear{Tamura, Huang, and Boos}{Tamura
  et~al.}{2011}]{03_tamura2011estimation}
Tamura, R.~N., X.~Huang, and D.~D. Boos (2011).
\newblock Estimation of treatment effect for the sequential parallel design.
\newblock {\em Statistics in medicine\/}~{\em 30\/}(30), 3496--3506.

\bibitem[\protect\citeauthoryear{Whitehead, Desai, and Jaki}{Whitehead
  et~al.}{2020}]{04_whitehead2020estimation}
Whitehead, J., Y.~Desai, and T.~Jaki (2020).
\newblock Estimation of treatment effects following a sequential trial of
  multiple treatments.
\newblock {\em Statistics in medicine\/}~{\em 39\/}(11), 1593--1609.

\bibitem[\protect\citeauthoryear{Zhang, Guo, and Zhang}{Zhang
  et~al.}{2022}]{ZHANG2022_crossovertrials}
Zhang, B., J.~Guo, and H.~Zhang (2022).
\newblock Design and analysis of crossover trials for investigating high-risk
  medical devices: A review.
\newblock {\em Contemporary Clinical Trials Communications\/}~{\em 30}, 101004.

\bibitem[\protect\citeauthoryear{Zhou, Ernst, Morgan, Rubin, and Zhang}{Zhou
  et~al.}{2018}]{08_zhou2018sequential}
Zhou, Q., P.~A. Ernst, K.~L. Morgan, D.~B. Rubin, and A.~Zhang (2018).
\newblock Sequential rerandomization.
\newblock {\em Biometrika\/}~{\em 105\/}(3), 745--752.

\end{thebibliography}

\newpage

\section*{Appendix}

\subsection*{Proofs}

\begin{retheorem}[Restatement of Theorem \ref{theorem: identify_bsd}]
    Under the standard assumptions of the Rubin Causal Model, ATE is identifiable in a between-subjects design.
\end{retheorem}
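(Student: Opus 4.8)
The plan is to express the estimand $\tau = \mathbb{E}[Y_i(1) - Y_i(0)]$ entirely as a functional of the joint distribution of the observed data $(Y_i, Z_i, X_i)$, since a causal parameter is identified precisely when it can be written in terms of observable quantities. The strategy is the standard g-formula (backdoor adjustment) argument, and the substance of the proof is simply to make explicit where each of the three Rubin Causal Model assumptions enters.

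First I would invoke SUTVA, whose consistency component guarantees via the key identity $Y_i = Y_i(1)Z_i + Y_i(0)(1-Z_i)$ that $Y_i = Y_i(z)$ on the event $\{Z_i = z\}$. This lets me replace observed outcomes with potential outcomes inside any expectation conditioned on $Z_i = z$, giving $\mathbb{E}[Y_i \mid Z_i = z, X_i] = \mathbb{E}[Y_i(z) \mid Z_i = z, X_i]$ for each $z \in \{0,1\}$. Next I would apply ignorability, $(Y_i(0), Y_i(1)) \indep Z_i \mid X_i$, to drop the conditioning on $Z_i$, so that $\mathbb{E}[Y_i(z) \mid Z_i = z, X_i] = \mathbb{E}[Y_i(z) \mid X_i]$. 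Chaining the two displays yields $\mathbb{E}[Y_i(z) \mid X_i] = \mathbb{E}[Y_i \mid Z_i = z, X_i]$, whose right-hand side is an observable conditional mean.

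Iterating expectations over the marginal law of $X_i$ then gives $\mathbb{E}[Y_i(z)] = \mathbb{E}_{X_i}\big[\mathbb{E}[Y_i \mid Z_i = z, X_i]\big]$, and taking the difference for $z = 1$ and $z = 0$ produces $\tau = \mathbb{E}_{X_i}\big[\mathbb{E}[Y_i \mid Z_i = 1, X_i] - \mathbb{E}[Y_i \mid Z_i = 0, X_i]\big]$, a functional of the observed-data distribution alone, which establishes identifiability.

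The step I expect to require the most care — and the role played by the overlap assumption — is ensuring that the conditional means $\mathbb{E}[Y_i \mid Z_i = z, X_i = x]$ are well-defined for every $x$ in the support of $X_i$ and both $z \in \{0,1\}$. Without $0 < \text{Pr}\{Z_i = 1 \mid X_i = x\} < 1$, the conditioning event would have zero probability on part of the support and the outer expectation over $X_i$ would integrate an undefined integrand, so overlap is exactly what makes the adjustment formula meaningful. The main obstacle here is thus less a mathematical difficulty than disciplined bookkeeping of which assumption licenses each equality; I would close by noting that in a between-subjects design randomization of $Z_i$ (possibly within covariate strata) makes ignorability hold by construction and guarantees overlap, and that the difference-in-means estimator is precisely the sample analogue of the identified functional under complete randomization.
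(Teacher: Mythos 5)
Your proposal is correct and follows essentially the same route as the paper's proof: apply the consistency identity to rewrite observed conditional means as conditional means of potential outcomes, invoke ignorability to drop the conditioning on $Z_i$, and close with iterated expectations over $X_i$. Your explicit remark that overlap is what makes the conditional means $\mathbb{E}[Y_i \mid Z_i = z, X_i = x]$ well-defined is a detail the paper leaves implicit, but it does not change the argument.
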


\begin{proof}
    We denote the observable outcomes as $Y_i^{\text{obs}}$ for clarity. We take the difference of the expected observed outcomes between those who received treatment $Z_i = 1$ and those those who did not $Z_i = 0$, conditioning on their covariates
    
    \[\mathbb{E}[Y_{i}^{\text{obs}} \mid Z_i = 1, X_i] - \mathbb{E}[Y_{i}^{\text{obs}} \mid Z_i = 0, X_i]\] 
 
    We can write the following strings of equalities
    \[
        \begin{aligned}
            &\quad \,\,\mathbb{E}[Y_{i}^{\text{obs}} \mid Z_i = 1, X_i] - \mathbb{E}[Y_{i}^{\text{obs}} \mid Z_i = 0, X_i] \\
            &= \mathbb{E}[Y_i(1)Z_i + Y_i(0)(1 - Z_i) \mid Z_i = 1, X_i] - \mathbb{E}[Y_i(1)Z_i + Y_i(0)(1 - Z_i) \mid Z_i = 0, X_i] \\
            &= \mathbb{E}[Y_i(1) \mid Z_i = 1, X_i] - \mathbb{E}[Y_i(0) \mid Z_i = 0, X_i] \\
            &= \mathbb{E}[Y_i(1) \mid X_i] - \mathbb{E}[Y_i(0) \mid X_i] \quad \textit{(by ignorability from randomization)} \\ 
            &= \mathbb{E}[Y_i(1) - Y_i(0) \mid X_i] 
        \end{aligned}
    \] which is the Conditional Average Treatment Effect. With the Conditional Average Treatment Effect, we can apply Adam's Law to obtain the Average Treatment Effect.

    \[\begin{aligned}
        \mathbb{E}_{X_i}\mathbb{E}[Y_i(1) - Y_i(0) \mid X_i] =  \mathbb{E}[Y_i(1) - Y_i(0)]  = \tau
    \end{aligned}\]
\end{proof}

\begin{reprop}[Restatement of Proposition \ref{prop: FWL}]
The OLS estimate from Equation \eqref{eq: OLS_estimate}, \(\hat{\alpha}_1 = \hat{\tau}^{\text{CWSD}}\), can be expressed as a convex combination of the period-specific estimates:
\[
\hat{\tau}^{\text{CWSD}} = q\,\hat{\tau}_{t_1} + (1-q)\,\hat{\tau}_{t_2},
\]
with \(q \in [0,1]\).
\end{reprop}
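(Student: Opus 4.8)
The name of the proposition already signals the route: the plan is to invoke the \emph{Frisch--Waugh--Lovell (FWL)} theorem on the pooled specification in Equation \eqref{eq: OLS_estimate}, treating the intercept, the time dummy $\mathbbm{1}\{t=t_1\}$, and the covariate $X_{i,t}$ as nuisance regressors to be partialled out, leaving $Z_{i,t}$ as the single regressor of interest. The key structural observation is that with exactly two periods the linear span of the constant together with the time dummy coincides with the span of the two period-specific intercepts; consequently, partialling these out is equivalent to \emph{within-period} demeaning. First I would residualize $Z_{i,t}$ and $Y_{i,t}$ on the nuisance regressors to obtain $\tilde Z_{i,t}$ and $\tilde Y_{i,t}$, and write the FWL representation $\hat\alpha_1 = \big(\sum_{t}\sum_i \tilde Z_{i,t}\tilde Y_{i,t}\big)\big/\big(\sum_{t}\sum_i \tilde Z_{i,t}^2\big)$.

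Next I would split both the numerator and the denominator across the two periods. Writing $S_{t}\coloneqq \sum_i \tilde Z_{i,t}^2$ for the within-period residual sum of squares of treatment, the period-$t$ contribution to the numerator is exactly $S_t\,\hat\tau_t$, where $\hat\tau_t = \big(\sum_i \tilde Z_{i,t}\tilde Y_{i,t}\big)\big/S_t$ is the period-specific slope on residualized treatment. Collecting terms gives
\[
\hat\tau^{\text{CWSD}}=\hat\alpha_1=\frac{S_{t_1}\hat\tau_{t_1}+S_{t_2}\hat\tau_{t_2}}{S_{t_1}+S_{t_2}},
\]
so that setting $q\coloneqq S_{t_1}/(S_{t_1}+S_{t_2})$ yields the claimed form $\hat\tau^{\text{CWSD}}=q\,\hat\tau_{t_1}+(1-q)\,\hat\tau_{t_2}$. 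Convexity is then immediate: $S_{t_1}$ and $S_{t_2}$ are sums of squares and hence nonnegative, so $q\in[0,1]$ whenever there is treatment variation in at least one period (so that $S_{t_1}+S_{t_2}>0$). This is the easy part and requires no argument beyond nonnegativity of the weights.

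The step I expect to be the main obstacle is the role of the covariate $X_{i,t}$, which enters Equation \eqref{eq: OLS_estimate} through a \emph{single} pooled slope $\alpha_3$ rather than period-specific slopes. Partialling $X_{i,t}$ out of the pooled sample does not in general coincide with partialling it out within each period, so the clean identification of the period-$t$ numerator with $S_t\hat\tau_t$ can fail unless $\hat\tau_t$ is defined consistently with the pooled adjustment. I would resolve this either by defining the period-specific estimators $\hat\tau_t$ through the same pooled residualization (so that the decomposition is exact by construction), or by allowing a time--covariate interaction (equivalently, period-specific covariate slopes), under which the two periods separate completely and the within-period reading of $\hat\tau_t$ is recovered. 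In the covariate-free or saturated case the decomposition holds verbatim, and in all cases the weights remain nonnegative sums of squares, preserving the convexity conclusion.
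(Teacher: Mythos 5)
Your proposal follows essentially the same route as the paper's proof: apply Frisch--Waugh--Lovell to partial out the time dummy and covariate, split the numerator of $\hat\alpha_1$ by period, and identify the weights as normalized within-period residual sums of squares of treatment, which are nonnegative and hence give $q \in [0,1]$. Your closing caveat --- that the pooled covariate slope means the period-specific $\hat\tau_t$ must be defined via the \emph{pooled} residualization rather than a genuinely separate per-period regression --- is a real subtlety that the paper's own proof glosses over when it calls $\hat\tau_{t_1}$ ``the treatment effect estimated from the $t_1$ period alone,'' so your treatment is, if anything, slightly more careful.
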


\begin{proof}
For notational simplicity, define
\[
W_{i,t} = \begin{pmatrix} \mathbbm{1}\{t = t_1\} \\[1mm] X_{i,t} \end{pmatrix}.
\]
Then, the regression in Equation \eqref{eq: OLS_estimate} can be rewritten as
\[
Y_{i,t} = \alpha_0 + \alpha_1 Z_{i,t} + W_{i,t}'\beta + \varepsilon_{i,t},
\]
where
\[
\beta = \begin{pmatrix} \alpha_2 \\ \alpha_3 \end{pmatrix}.
\]

To isolate the effect of \(Z_{i,t}\), we first partial out the influence of \(W_{i,t}\) by estimating the auxiliary regressions:
\[
Y_{i,t} = \hat{\gamma}_0 + W_{i,t}'\hat{\gamma}_1 + r^Y_{i,t} \quad \text{and} \quad Z_{i,t} = \hat{\delta}_0 + W_{i,t}'\hat{\delta}_1 + r^Z_{i,t},
\]
where \(r^Y_{i,t}\) and \(r^Z_{i,t}\) are the residuals. By the Frisch–Waugh–Lovell theorem, the OLS estimator for \(\alpha_1\) is given by
\[
\begin{aligned}
    \hat{\alpha}_1 &= \frac{\sum_{t\in\{t_1,t_2\}} \sum_{i=1}^n (r^Y_{i,t} - \bar{r}^Y_{i,t})\,(r^Z_{i,t} - \bar{r}^Z_{i,t})}{\sum_{t\in\{t_1,t_2\}} \sum_{i=1}^n (r^Z_{i,t} - \bar{r}^Z_{i,t})^2} \\
&= \frac{\sum_{t\in\{t_1,t_2\}} \sum_{i=1}^n r^Y_{i,t}\,r^Z_{i,t}}{\sum_{t\in\{t_1,t_2\}} \sum_{i=1}^n (r^Z_{i,t})^2} \quad since \,\,\bar{r}^Z_{i,t} = \bar{r}^Y_{i,t} = 0
\end{aligned}
\]
We can decompose the numerator by time period:
\[
\hat{\alpha}_1 = \frac{\sum_{i=1}^n r^Y_{i,t_1}\,r^Z_{i,t_1}}{\sum_{t\in\{t_1,t_2\}} \sum_{i=1}^n (r^Z_{i,t})^2} + \frac{\sum_{i=1}^n r^Y_{i,t_2}\,r^Z_{i,t_2}}{\sum_{t\in\{t_1,t_2\}} \sum_{i=1}^n (r^Z_{i,t})^2}.
\]

Focusing on the \(t_1\) component, note that
\[
\frac{\sum_{i=1}^n r^Y_{i,t_1}\,r^Z_{i,t_1}}{\sum_{t\in\{t_1,t_2\}} \sum_{i=1}^n (r^Z_{i,t})^2} = \left(\frac{\sum_{i=1}^n (r^Z_{i,t_1})^2}{\sum_{t\in\{t_1,t_2\}} \sum_{i=1}^n (r^Z_{i,t})^2}\right) \left(\frac{\sum_{i=1}^n r^Y_{i,t_1}\,r^Z_{i,t_1}}{\sum_{i=1}^n (r^Z_{i,t_1})^2}\right) = q\,\hat{\tau}_{t_1},
\]
where
\[
q = \frac{\sum_{i=1}^n (r^Z_{i,t_1})^2}{\sum_{t\in\{t_1,t_2\}} \sum_{i=1}^n (r^Z_{i,t})^2} \in [0,1],
\]
and \(\hat{\tau}_{t_1}\) is the treatment effect estimated from the \(t_1\) period alone. An analogous argument shows that the \(t_2\) component can be written as
\[
\frac{\sum_{i=1}^n r^Y_{i,t_2}\,r^Z_{i,t_2}}{\sum_{t\in\{t_1,t_2\}} \sum_{i=1}^n (r^Z_{i,t})^2} = (1-q)\,\hat{\tau}_{t_2}.
\]
Thus, we conclude that
\[
\hat{\tau}^{\text{CWSD}} = \hat{\alpha}_1 = q\,\hat{\tau}_{t_1} + (1-q)\,\hat{\tau}_{t_2}.
\]
\end{proof}

\begin{retheorem}[Restatement of Theorem \ref{theorem: identify_cwsd}]
Under the standard Rubin Causal Model assumptions (SUTVA and overlap), the Average Treatment Effect (ATE) is identifiable in a \cwsd{} at $t_2$ if the following assumptions hold:
\begin{enumerate}
    \item \textbf{Simple Direct Carryover Effects:} As defined in Assumption \ref{assumption:simple_carryover_effects}.
    \item \textbf{Ignorability at Time \(t_2\):} 
    \[
    Y_{i,t_2}(z_{i,t_2}) \indep Z_{i,t_2} \mid X_{i,t_2}, \quad \forall z_{i,t_2} \in \{0,1\}.
    \]
    \item \textbf{Parallel Trends:} 
    \[
    Y_{i,t_2}(z_{i,t_2}) = Y_{i,t_1}(z_{i,t_2}) + c, \quad \forall z_{i,t_2} \in \{0,1\}, \quad c \in \mathbb{R}.
    \]
\end{enumerate}
\end{retheorem}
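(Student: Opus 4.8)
The plan is to establish the chain
\[
\tau_{t_2} = \mathbb{E}[Y_{i,t_2}(0,1) - Y_{i,t_2}(1,0)] = \mathbb{E}[Y_{i,t_2}(1) - Y_{i,t_2}(0)] = \mathbb{E}[Y_{i,t_1}(1) - Y_{i,t_1}(0)] = \tau_{t_1} = \tau,
\]
peeling off one identifying assumption at each equality. First I would invoke the \emph{potential outcome without prior exposure} together with Definition \ref{defi:direct_carryover_effects} to write, unit by unit, $Y_{i,t_2}(z_{t_1},z_{t_2}) = Y_{i,t_2}(z_{t_2}) + C_{i,z_{t_1}\to z_{t_2}}$. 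Substituting the two observed sequences $(0,1)$ and $(1,0)$ into the estimand gives
\[
\tau_{t_2} = \bigl(\mathbb{E}[Y_{i,t_2}(1)] - \mathbb{E}[Y_{i,t_2}(0)]\bigr) + \bigl(\mathbb{E}[C_{i,0\to 1}] - \mathbb{E}[C_{i,1\to 0}]\bigr),
\]
so that Assumption \ref{assumption:simple_carryover_effects} (simple direct carryover effects) immediately annihilates the second parenthesis, since $\mathbb{E}[C_{i,0\to1}] = \mathbb{E}[C_{i,1\to0}] = C$.

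Next I would identify the no-prior-exposure means $\mathbb{E}[Y_{i,t_2}(z)]$ from the observed data. Here the argument mirrors the proof of Theorem \ref{theorem: identify_bsd}: conditioning on $X_{i,t_2}$ and using \textbf{Ignorability at $t_2$} ($Y_{i,t_2}(z) \indep Z_{i,t_2} \mid X_{i,t_2}$) together with overlap, I would match each conditional observed mean to the corresponding conditional potential-outcome mean, and then marginalize over $X_{i,t_2}$ by the law of iterated expectations. The purpose of this step is to block the two indirect carryover pathways through $X_{i,t_2}$ identified in Figure \ref{fig:CWSD_DAG}; without conditioning on $X_{i,t_2}$ the $t_2$ contrast would remain confounded by the covariate shift induced by $Z_{i,t_1}$.

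Finally, the \textbf{Parallel Trends} assumption lets me substitute $Y_{i,t_2}(z) = Y_{i,t_1}(z) + c$ for each $z \in \{0,1\}$; since the additive constant $c$ is common to both arms it cancels in the difference, yielding $\mathbb{E}[Y_{i,t_2}(1) - Y_{i,t_2}(0)] = \mathbb{E}[Y_{i,t_1}(1) - Y_{i,t_1}(0)] = \tau_{t_1}$. Because at $t_1$ the design coincides with a \bsd{} under randomization of the sequence, $\tau_{t_1} = \tau$ as already noted, closing the chain.

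The step I expect to be the main obstacle is the second one, because the direct carryover effects in Definition \ref{defi:direct_carryover_effects} are defined \emph{conditionally} on the realized post-treatment covariate $X_{i,t_2}$, which is itself a descendant of $Z_{i,t_1}$. I would need to argue carefully that (i) conditioning on $X_{i,t_2}$ blocks the paths $Z_{i,t_1} \to X_{i,t_2} \to Y_{i,t_2}$ and $Z_{i,t_1} \to Y_{i,t_1} \to X_{i,t_2} \to Y_{i,t_2}$ without inadvertently opening a collider path (the reason $Y_{i,t_1}$ itself must not be conditioned on), and (ii) the simple-carryover cancellation, stated at the conditional level, survives marginalization over the treatment-dependent distribution of $X_{i,t_2}$. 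Reconciling the ``clean-slate'' counterfactual $Y_{i,t_2}(z)$ with the observed outcomes through the implicit independence $Y_{i,t_2}(z)\indep Z_{i,t_1}\mid Z_{i,t_2},X_{i,t_2}$ is the delicate bookkeeping that makes this the crux of the proof.
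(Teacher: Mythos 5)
Your proposal is correct and follows essentially the same route as the paper's proof: cancel the carryover terms via Assumption \ref{assumption:simple_carryover_effects}, invoke ignorability at $t_2$ with conditioning on $X_{i,t_2}$ and the law of iterated expectations, then apply parallel trends to recover $\tau$. The only difference is cosmetic --- you work from the estimand toward the observed data while the paper starts from the observed conditional means and works toward $\tau$ --- and the conditional-versus-marginal subtlety you flag is handled in the paper simply by keeping every step conditional on $(Z_{i,t_1},Z_{i,t_2},X_{i,t_2})$ until the final application of Adam's Law.
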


\begin{proof}
    
    In a counterbalanced within-subjects design, we only observe \[\mathbb{E}[Y_{i,t_2}^{\text{obs}} \mid Z_{i,t_1} = 1, Z_{i,t_2} = 0, X_{i,t_2}] \quad \text{and} \quad \mathbb{E}[Y_{i,t_2}^{\text{obs}} \mid Z_{i,t_1} = 0, Z_{i,t_2} = 1, X_{i,t_2}]\] 

    Taking the difference for those that receive treatment at time $t_2$ and the control group, yields
    \[\begin{aligned}
        & \quad \mathbb{E}[Y_{i,t_2}^{\text{obs}} \mid Z_{i,t_1} = 0, Z_{i,t_2} = 1, X_{i,t_2}] - \mathbb{E}[Y_{i,t_2}^{\text{obs}} \mid Z_{i,t_1} = 1, Z_{i,t_2} = 0, X_{i,t_2}] \\
        &= \mathbb{E}[Y_{i,t_2}(0,1) \mid Z_{i,t_1} = 0, Z_{i,t_2} = 1, X_{i,t_2}] - \mathbb{E}[Y_{i,t_2}(1,0) \mid Z_{i,t_1} = 1, Z_{i,t_2} = 0, X_{i,t_2}] \quad \textit{(by identity \ref{equation: t2_identity_cwsd})} \\
        &= \mathbb{E}[Y_{i,t_2}(1) \mid Z_{i,t_2} = 1, X_{i,t_2}] - C - \mathbb{E}[Y_{i,t_2}(0) \mid  Z_{i,t_2} = 0, X_{i,t_2}] + C \quad \textit{(by Assumption \ref{assumption:simple_carryover_effects}}) \\
        &= \mathbb{E}[Y_{i,t_2}(1) \mid Z_{i,t_2} = 1, X_{i,t_2}] - \mathbb{E}[Y_{i,t_2}(0) \mid  Z_{i,t_2} = 0, X_{i,t_2}]  \\
        &= \mathbb{E}[Y_{i,t_2}(1) \mid X_{i,t_2}] - \mathbb{E}[Y_{i,t_2}(0) \mid  X_{i,t_2}] \quad \textit{(by ignorability at $t_2$)} \\
        &= \mathbb{E}[Y_{i,t_2}(1) - Y_{i,t_2}(0) \mid X_{i,t_2}] \quad \textit{(by linearity of expectations)} 
    \end{aligned}
    \] 

    With the Conditional Average Treatment Effect, we can apply Adam's Law to obtain the Average Treatment Effect.

    \[\begin{aligned}
        \mathbb{E}_{X_{i,t_2}}\mathbb{E}[Y_{i,t_2}(1) - Y_{i,t_2}(0) \mid X_{i,t_2}] 
        &= \mathbb{E}[Y_{i,t_2}(1) - Y_{i,t_2}(0)] \\
        &= \mathbb{E}[Y_{i,t_1}(1) + c - Y_{i,t_1}(0) - c\,] \quad \textit{(by parallel trends assumption)} \\
        &= \mathbb{E}[Y_{i,t_1}(1) - Y_{i,t_1}(0)]\\
        &= \tau
    \end{aligned}\]
\end{proof}

\begin{recor}[Restatement of Corollary \ref{cor: two-period-two-arm}]
    In \emph{any} two‐period, two‐arm sequential experimental design, if \emph{sequential exchangeability}
  holds, then
  \[
    \tau_{t_2}^{\mathrm{seq}} \;=\; \tau.
  \]
\end{recor}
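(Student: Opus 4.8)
The plan is to reproduce the argument of Theorem \ref{theorem: identify_cwsd} almost line for line, with two adjustments: the conditioning set at $t_2$ is now only $(Z_{i,t_2}, X_{i,t_2})$ rather than the full pair $(Z_{i,t_1}, Z_{i,t_2}, X_{i,t_2})$, and the simple-carryover step is replaced by the stronger controlled-carryover condition (Assumption \ref{assumption: sequential_exchangeability: controlled_carryover}), which equates each sequence-dependent potential outcome with its no-prior-exposure analogue outright rather than merely up to a constant that cancels across arms.

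First I would write the second-period estimand $\tau_{t_2}^{\mathrm{seq}}$ as the covariate-adjusted contrast of observed outcomes,
\[
\mathbb{E}[Y_{i,t_2}^{\text{obs}} \mid Z_{i,t_2} = 1, X_{i,t_2}] - \mathbb{E}[Y_{i,t_2}^{\text{obs}} \mid Z_{i,t_2} = 0, X_{i,t_2}],
\]
and substitute the general outcome identity in Equation \eqref{equation: t2_identity}. Conditioning only on $Z_{i,t_2}$, each term expands into a sum over the possible first-period assignments, weighted by $\Pr(Z_{i,t_1} = z_{i,t_1} \mid Z_{i,t_2} = z_{i,t_2}, X_{i,t_2})$. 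This is precisely the step that does not appear in the CWSD proof, where $Z_{i,t_1} = 1 - Z_{i,t_2}$ makes the sum degenerate to a single term.

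Next I would invoke controlled carryover effects: within each branch of that sum, Assumption \ref{assumption: sequential_exchangeability: controlled_carryover} replaces $Y_{i,t_2}(z_{i,t_1}, z_{i,t_2})$ by $Y_{i,t_2}(z_{i,t_2})$, which no longer depends on $z_{i,t_1}$. The first-period weights then factor out and sum to one, leaving $\mathbb{E}[Y_{i,t_2}(z_{i,t_2}) \mid Z_{i,t_2} = z_{i,t_2}, X_{i,t_2}]$ for each arm, independently of the design's particular mechanism for assigning $Z_{i,t_1}$. The remaining steps are then verbatim from Theorem \ref{theorem: identify_cwsd}: ignorability at $t_2$ (Assumption \ref{assumption: sequential_exchangeability: ignorability_t2}) drops the conditioning on $Z_{i,t_2}$ to yield $\mathbb{E}[Y_{i,t_2}(1) - Y_{i,t_2}(0) \mid X_{i,t_2}]$; the law of iterated expectations over $X_{i,t_2}$ removes the conditioning on covariates; and parallel trends (Assumption \ref{assumption: sequential_exchangeability: parallel_trend}) rewrites each $t_2$ potential outcome as its $t_1$ version plus the common constant $c$, which cancels to give $\mathbb{E}[Y_{i,t_1}(1) - Y_{i,t_1}(0)] = \tau$.

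The main obstacle I anticipate lies entirely in the marginalization of the second step: I must confirm that controlled carryover holds pointwise across every value of $Z_{i,t_1}$ occurring with positive conditional probability, so that the design-specific assignment weights genuinely sum to one and drop out rather than reweighting the two arms asymmetrically. Once that is verified, the full generality claimed in the corollary—that any two-period, two-arm design whose second-period outcome admits the representation in Equation \eqref{equation: t2_identity} is covered—follows because nothing in the argument beyond controlled carryover ever references the specific joint law of $(Z_{i,t_1}, Z_{i,t_2})$.
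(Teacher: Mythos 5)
Your proposal is correct and follows essentially the same route as the paper's proof: both start from the general decomposition in Equation \eqref{equation: t2_identity}, use controlled carryover and ignorability at $t_2$ to reduce the observed second-period contrast to $\mathbb{E}[Y_{i,t_2}(1)-Y_{i,t_2}(0)\mid X_{i,t_2}]$, and then apply iterated expectations and parallel trends. The only difference is presentational—you make explicit the marginalization over $Z_{i,t_1}$ (the assignment weights summing to one), whereas the paper leaves that step implicit by noting the fully conditional expectation is constant in $z_{t_1}$—and your worry about the weights is resolved because Assumption \ref{assumption: sequential_exchangeability: controlled_carryover} is stated for all $z_{i,t_1}, z_{i,t_2} \in \{0,1\}$.
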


\begin{proof}
  Consider any two‐period, two‐arm design whose second‐period observed outcome admits the decomposition
  \[
    Y^{\mathrm{obs}}_{i,t_2}
    = \sum_{z_{t_1},z_{t_2}\in\{0,1\}}
      Y_{i,t_2}(z_{t_1},z_{t_2})
      \,1\{Z_{i,t_1}=z_{t_1}\}\,1\{Z_{i,t_2}=z_{t_2}\},
  \]
  as in Equation \ref{equation: t2_identity}. Under Assumption \ref{assumption: sequential_exchangeability} (sequential exchangeability), we have
  \[
    \mathbb{E}\bigl[Y^{\mathrm{obs}}_{i,t_2}\mid Z_{i,t_1}=z_{t_1}, Z_{i,t_2}=z_{t_2},\,X_{i,t_2}\bigr]
    = \mathbb{E}\bigl[Y_{i,t_2}(z)\mid X_{i,t_2}\bigr],
    \quad z_{t_1}, z_{t_2}\in\{0,1\},
  \]
  by (a) and (b) of Assumption \ref{assumption: sequential_exchangeability}, and hence
  \[
      \mathbb{E}\bigl[Y^{\mathrm{obs}}_{i,t_2}\mid Z_{i,t_2}=1,X_{i,t_2}\bigr]
      -\mathbb{E}\bigl[Y^{\mathrm{obs}}_{i,t_2}\mid Z_{i,t_2}=0,X_{i,t_2}\bigr] = \mathbb{E}\bigl[Y_{i,t_2}(1)-Y_{i,t_2}(0)\mid X_{i,t_2}\bigr]
  \]
  Finally, taking expectations over \(X_{i,t_2}\) and using the parallel‐trends assumption \(Y_{i,t_2}(z)=Y_{i,t_1}(z)+c\) from Assumption 2(c) yields
  \[
    \tau^{\mathrm{seq}}_{t_2}
    = E\bigl[Y_{i,t_2}(1)-Y_{i,t_2}(0)\bigr]
    = E\bigl[Y_{i,t_1}(1)-Y_{i,t_1}(0)\bigr]
    = \tau.
  \]
  This completes the proof.
\end{proof}

\subsection*{Control–Carryover Simulation}

We conducted a Monte Carlo study to assess how two distinct carryover structures bias the period-2 treatment effect when omitted from the model.  For each simulation:

\begin{itemize}
  \item \textbf{Sample size:} \(N = 100\) independent subjects.
  \item \textbf{Period 1:}
    \begin{align*}
      Z_1 &\sim \mathrm{Bernoulli}(0.5),\\
      X_1 &\sim \mathcal{N}(0,1),\\
      Y_1 &= \alpha_0 + \alpha_1 Z_1 + \alpha_2 X_1 + \varepsilon_1,
    \end{align*}
    with parameters
    \[
      \alpha_0 = 2,\quad \alpha_1 = 5,\quad \alpha_2 = 5,\quad \varepsilon_1 \sim \mathcal{N}(0,1).
    \]
  \item \textbf{Period 2 (two carryover structures):}
    \begin{enumerate}
      \item \emph{Additive interaction:}
        \[
          Y_{2}^{\rm interact} 
          = \beta_0 + \beta_1 Z_2 + \gamma\,(Z_1 \times Z_2) + \beta_2 X_2 + \varepsilon_2.
        \]
      \item \emph{Compounding effect:}
        \[
          Y_{2}^{\rm compound} 
          = \beta_0 + \beta_1^{\,1 + \gamma Z_1}\,Z_2 + \beta_2 X_2 + \varepsilon_2.
        \]
    \end{enumerate}
    Here,
    \[
      Z_2 \sim \mathrm{Bernoulli}(0.5),\quad
      X_2 = X_1 + \delta,\quad \delta\sim \mathcal{N}(0,1),
    \]
    and
    \[
      \beta_0 = 2,\quad \beta_1 = 5,\quad \beta_2 = 5,\quad \varepsilon_2 \sim \mathcal{N}(0,1).
    \]
  \item \textbf{Grid of carryover magnitudes:} \(\gamma\in\{-5, -4.9, \dots, +5\}\) (100 values), with 100 replications each.
  \item \textbf{Models fitted to \(Y_2\):}
    \begin{enumerate}
      \item \emph{Naïve:} \(\mathrm{lm}(Y_2 \sim Z_2 + X_2)\).
      \item \emph{Fixed-effect control:} \(\mathrm{lm}(Y_2 \sim Z_2 + Z_1 + X_2)\).
    \end{enumerate}
  \item \textbf{Summary:} For each model and carryover structure, we recorded the mean, minimum, and maximum of the estimated coefficient on \(Z_2\) across replications, and plotted these in Figure \ref{fig:simulation_control_carryover}.
\end{itemize}

\subsection*{CWSD Design–Comparison Simulation}

To compare bias under a fully counterbalanced within-subjects design versus sequential randomization, we simulated two-period data for \(N = 1{,}000\) subjects and evaluated five analytic strategies.

\begin{itemize}
  \item \textbf{True effect:} \(\tau = 1\).
  \item \textbf{Period 1:}
    \begin{align*}
      Z_{t1} &\sim \mathrm{Bernoulli}(0.5),\\
      X_{t1} &\sim \mathcal{N}(0,1),\\
      Y_{t1} &= \tau\,Z_{t1} + X_{t1} + \varepsilon_{t1},\quad \varepsilon_{t1}\sim \mathcal{N}(0,1).
    \end{align*}
  \item \textbf{Period 2:}
    \begin{itemize}
      \item \emph{Design A (Counterbalanced):} \(Z_{t2} = 1 - Z_{t1}\).
      \item \emph{Design B (Sequential Randomization):} \(Z_{t2}\sim \mathrm{Bernoulli}(0.5)\).
      \item Covariate drift:
        \[
          X_{t2} = X_{t1} + \gamma\,Z_{t1},\quad \gamma\in\{-10, -9.8, \dots, +10\}\ (\text{101 values}).
        \]
      \item Outcome:
        \[
          Y_{t2} = \tau\,Z_{t2} + X_{t2} + \varepsilon_{t2},\quad \varepsilon_{t2}\sim \mathcal{N}(0,1).
        \]
    \end{itemize}
  \item \textbf{Replications:} 100 replications per \(\gamma\), stacking \((Y_{t1},Y_{t2})\), \((Z_{t1},Z_{t2})\), and \((X_{t1},X_{t2})\).
  \item \textbf{Estimation strategies:}
    \begin{enumerate}
      \item \emph{No control:} \(\mathrm{lm}(Y \sim Z)\).
      \item \emph{Direct control:} \(\mathrm{lm}(Y \sim Z + X)\).
      \item \emph{Propensity-score adjustment:} fit \(\hat e(X) = \Pr(Z=1\mid X)\) via logistic regression, then \(\mathrm{lm}(Y \sim Z + \hat e)\).
      \item \emph{Fixed-effects:} include period indicator \(t\) in \(\mathrm{lm}(Y \sim Z + t)\).
      \item \emph{Misspecification:} control with random noise \(U \sim \mathcal{N}(0,1)\) via \(\mathrm{lm}(Y \sim Z + U)\).
    \end{enumerate}
  \item \textbf{Summary:} For each strategy and design, we computed the mean and range (min, max) of the estimated \(\tau\) across replications.  Results appear in Figure~\ref{fig:simulation_cwsd}.
\end{itemize}

\subsection*{Code Availability}
The code used for this project is publicly available at: \url{https://github.com/justinhjy1004/CWSD}

\end{document}